\providecommand{\U}[1]{\protect\rule{.1in}{.1in}}
\newtheorem{theorem}{Theorem}
\newtheorem{lemma}[theorem]{Lemma}
\newenvironment{proof}[1][Proof]{\noindent\textbf{#1.} }{\ \rule{0.5em}{0.5em}}
\begin{document}

\title{Effect of the Interconnected Network Structure on the Epidemic Threshold}
\author{Huijuan Wang$^{1,2}$, Qian Li$^{2}$, Gregorio D'Agostino$^{3}$,
\and Shlomo Havlin$^{4}$ and H. Eugene Stanley$^{2}$ and Piet Van Mieghem$^{1}$\\$^{1}$Faculty of Electrical Engineering, Mathematics, and Computer Science, \\Delft University of Technology, Delft, The Netherlands\\$^{2}$Center for Polymer Studies and Department of Physics, \\Boston University, Boston, MA 02215 USA\\$^{3}$ENEA - CR "Casaccia," via Anguillarese 301, I-00123 Roma, Italy\\$^{4}$Department of Physics, Bar-Ilan University, 52900 Ramat-Gan, Israel}
\maketitle

\begin{abstract}
Most real-world networks are not isolated. In order to function fully, they
are interconnected with other networks, and this interconnection influences
their dynamic processes. For example, when the spread of a disease involves
two species, the dynamics of the spread within each species (the contact
network) differs from that of the spread between the two species (the
interconnected network). We model two generic interconnected networks using
two adjacency matrices, A and B, in which A is a $2N\times2N$ matrix that
depicts the connectivity within each of two networks of size $N$, and B a
$2N\times2N$ matrix that depicts the interconnections between the two. Using
an N-intertwined mean-field approximation, we determine that a critical
susceptable-infected-susceptable (SIS) epidemic threshold in two
interconnected networks is $1/\lambda_{1}(A+\alpha B)$, where the infection
rate is $\beta$ within each of the two individual networks and $\alpha\beta$
in the interconnected links between the two networks and $\lambda_{1}(A+\alpha
B)$ is the largest eigenvalue of the matrix $A+\alpha B$. In order to
determine how the epidemic threshold is dependent upon the structure of
interconnected networks, we analytically derive $\lambda_{1}(A+\alpha B)$
using perturbation approximation for small and large $\alpha$, the lower and
upper bound for any $\alpha$ as a function of the adjacency matrix of the two
individual networks, and the interconnections between the two and their
largest eigenvalues/eigenvectors. We verify these approximation and boundary
values for $\lambda_{1}(A+\alpha B)$ using numerical simulations, and
determine how component network features affect $\lambda_{1}(A+\alpha B)$. We
note that, given two isolated networks $G_{1}$ and $G_{2}$ with principle
eigenvectors $x$ and $y$ respectively, $\lambda_{1}(A+\alpha B)$ tends to be
higher when nodes $i$ and $j$ with a higher eigenvector component product
$x_{i}y_{j}$ are interconnected. This finding suggests essential insights into
ways of designing interconnected networks to be robust against epidemics.

\end{abstract}

\section{Introduction}

Complex network studies have traditionally focused on single networks in which
nodes represent agents and links represent the connections between agents.
Recent efforts have focused on complex systems that are comprised of
interconnected networks, a configuration that more accurately represents
real-world networks \cite{Buldyrev2010,Raissa}. Real-world power grids, for
example, are almost always coupled with communication networks. Power stations
need communication nodes for control and communication nodes need power
stations for electricity. The influence of coupled networks on cascading
failures has been widely studied
\cite{Buldyrev2010,ParshaniPRL2010,Gao,Zhou_inter,Huang_inter}. When a node at
one end of an interdependent link fails, the node at the other end of the link
usually fails. A non-consensus opinion model of two interconnected networks
that allows the opinion interaction rules within each individual network to
differ from those between the networks was recently studied \cite{Li}. This
model shows that opinion interactions between networks can transform
non-consensus opinion behavior into consensus opinion behavior.

In this paper we investigate the susceptable-infected-susceptable (SIS)
behavior of a spreading virus, a dynamic process in interconnected networks
that has received significant recent attention
\cite{Funk,Dickison,Mendiola,Sahneh}. An interconnected networks scenario is
essential when modeling epidemics because diseases spread across multiple
networks, e.g., across multiple species or communities, through both contact
network links within each species or community and interconnected network
links between them. Dickison et al. \cite{Dickison} study the behavior of
susceptible-infected-recovered (SIR) epidemics in interconnected networks.
Depending on the infection rate in weakly and strongly coupled network
systems, where each individual network follows the configuration model and
interconnections are randomly placed, epidemics will infect none, one, or both
networks of a two-network system. Mendiola et al. \cite{Mendiola} show that in
SIS model an endemic state may appear in the coupled networks even when an
epidemic is unable to propagate in each network separately. In this work we
will explore how both the structural properties of each individual network and
the behavior of the interconnections between them determine the epidemic
threshold of two generic interconnected networks.

In order to represent two generic interconnected networks, we represent a
network $G$ with $N$ nodes using an $N\times N$ adjacency matrix $A_{1}$ that
consists of elements $a_{ij}$, which are either one or zero depending on
whether there is a link between nodes $i$ and $j$. For the interconnected
networks, we consider two individual networks $G_{1}$ and $G_{2}$ of the same
size $N$. When nodes in $G_{1}$ are labeled from $1$ to $N$ and in $G_{2}$
labeled from $N+1$ to $2N$, the two isolated networks $G_{1}$ and $G_{2}$ can
be presented by a $2N\times2N$ matrix $A=\left[
\begin{array}
[c]{cc}%
A_{1} & \mathbf{0}\\
\mathbf{0} & A_{2}%
\end{array}
\right]  $ composed of their corresponding adjacency matrix $A_{1}$ and
$A_{2}$ respectively. Similarly, a $2N\times2N$ matrix $B=\left[
\begin{array}
[c]{cc}%
\mathbf{0} & B_{12}\\
B_{12}^{T} & \mathbf{0}%
\end{array}
\right]  $ represents the symmetric interconnections between $G_{1}$ and
$G_{2}$.\ The interconnected networks are composed of three network
components: network $A_{1}$, network $A_{2}$, and interconnecting network $B$.

In the SIS model, the state of each agent at time $t$ is a Bernoulli random
variable, where $X_{i}(t)=0$ if node $i$ is susceptible and $X_{i}(t)=1$ if it
is infected. The recovery (curing) process of each infected node is an
independent Poisson process with a recovery rate $\delta$. Each infected agent
infects each of its susceptible neighbors with a rate $\beta$, which is also
an independent Poisson process. The ratio $\tau\triangleq\beta/\delta$ is the
effective infection rate. A phase transition has been observed around a
critical point $\tau_{c}$ in a single network. When $\tau>\tau_{c}$, a
non-zero fraction of agents will be infected in the steady state, whereas if
$\tau<\tau_{c}$, infection rapidly disappears \cite{May,Satorras2012}. The
epidemic threshold via the N-intertwined mean-field approximation (NIMFA) is
$\tau_{c}=\frac{1}{\lambda_{1}(A)}$, where $\lambda_{1}(A)$ is the largest
eigenvalue of the adjacency matrix, also called the spectral radius
\cite{epiton}. For interconnected networks, we assume that the curing rate
$\delta$ is the same for all the nodes, that the infection rate along each
link of $G_{1}$ and $G_{2}$ is $\beta$, and that the infection rate along each
interconnecting link between $G_{1}$ and $G_{2}$ is $\alpha\beta$, where
$\alpha$ is a real constant ranging within $[0,\infty)$ without losing generality.

We first show that the epidemic threshold for $\beta/\delta$ in interconnected
networks via NIMFA is $\tau_{c}=\frac{1}{\lambda_{1}(A+\alpha B)}$, where
$\lambda_{1}(A+\alpha B)$ is the largest eigenvalue of the matrix $A+\alpha
B$. We further express $\lambda_{1}(A+\alpha B)$ as a function of each network
component $A_{1}$, $A_{2}$, and $B$ and their eigenvalues/eigenvectors to
reveal the contribution of each component network. This is a significant
mathematical challenge, except for special cases, e.g., when $A$ and $B$
commute, i.e., $AB=BA$ (see Sec.~\ref{special cases}). Our main contribution
is that we analytically derive for the epidemic characterizer $\lambda
_{1}(A+\alpha B)$ (a) its perturbation approximation for small $\alpha$, (b)
its perturbation approximation for large $\alpha$, and (c) its lower and upper
bound for any $\alpha$ as a function of component network $A_{1}$, $A_{2}$,
and $B$ and their the largest eigenvalues/eigenvectors. Numerical simulations
in Sec.~\ref{simulation} verify that these approximations and bounds well
approximate $\lambda_{1}(A+\alpha B)$, and thus reveal the effect of component
network features on the epidemic threshold of the whole system of
interconnected networks, which provides essential insights into designing
interconnected networks that are robust against the spread of epidemics (see
Sec.~\ref{conclusion}).

Sahneh et al. \cite{Sahneh} recently studied SIS epidemics on generic
interconnected networks in which the infection rate can differ between $G_{1}$
and $G_{2}$, and derived the epidemic threshold for the infection rate in one
network while assuming that the infection does not survive in the other. Their
epidemic threshold was expressed as the largest eigenvalue of a function of
matrices. Our work explains how the epidemic threshold of generic
interconnected networks is related to the properties (eigenvalue/eigenvector)
of each network component $A_{1}$, $A_{2}$, and $B$ without any approximation
on the network topology.

Graph spectra theory \cite{PVM_Spectrabook} and modern network theory,
integrated with dynamic systems theory, can be used to understand how network
topology can predict these dynamic processes. Youssef and Scoglio \cite{sir}
have shown that a SIR epidemic threshold via NIMFA also equals $1/\lambda_{1}%
$. The Kuramoto synchronization process of coupled oscillators \cite{Restrepo}
and percolation \cite{percolation} also features a phase transition that
specifies the onset of a remaining fraction of locked oscillators and the
appearance of a giant component, respectively. Note that a mean-field
approximation predicts both phase transitions at a critical point that is
proportional to $1/\lambda_{1}$. Thus we expect our results to apply to a
wider range of dynamic processes in interconnected networks.

\section{Epidemic Threshold of Interconnected Networks}

In the SIS epidemic spreading process, the probability of infection
$v_{i}(t)=E[X_{i}(t)]$ for a node $i$ in interconnected networks $G$ is
described by
\[
\frac{dv_{i}(t)}{dt}=\left(  \beta\sum_{j=1}^{2N}a_{ij}v_{j}(t)+\alpha
\beta\sum_{j=1}^{2N}b_{ij}v_{j}(t)\right)  \left(  1-v_{j}(t)\right)  -\delta
v_{j}(t)
\]
via NIMFA, where $a_{ij}$ and $b_{ij}$ is an element of matrix $A$ and $B$
respectively. Its matrix form becomes
\[
\frac{dV(t)}{dt}=\left(  \beta\left(  A+\alpha B\right)  V(t)-\delta I\right)
-\beta diag\left(  v_{i}(t)\right)  \left(  A+\alpha B\right)  V(t).
\]
The governing equation of the SIS spreading process on a single network
$A_{1}$ is
\[
\frac{dV(t)}{dt}=\left(  \beta A_{1}V(t)-\delta I\right)  -\beta diag\left(
v_{i}(t)\right)  A_{1}V(t),
\]
whose epidemic threshold has been proven \cite{epiton} to be
\[
\tau_{c}=\frac{1}{\lambda_{1}(A_{1})},
\]
which is a lower bound of the epidemic threshold \cite{cator}. Hence, the
epidemic threshold of interconnected networks by NIMFA is
\begin{equation}
\tau_{c}=\frac{1}{\lambda_{1}(A+\alpha B)} \label{epidemicthreshold}%
\end{equation}
which depends on the largest eigenvalue of the matrix $A+\alpha B$. The matrix
$A+\alpha B$ is a weighted matrix, where $0\leq\alpha<\infty$. Note that the
NIMFA model is an improvement over earlier epidemic models \cite{Satorras2012}
in that it applies no approximations to network topologies, and thus it allows
us to identify the specific role of a general network structure on the
spreading process.

\section{Analytic approach: $\lambda_{1}(A+\alpha B)$ in relation to component
network properties}

\label{analysis}

The spectral radius $\lambda_{1}(A+\alpha B)$ as shown in the last section is
able to characterize epidemic spreading in interconnected networks. In this
section we explore how $\lambda_{1}(A+\alpha B)$ is influenced by the
structural properties of interconnected networks and by the relative infection
rate $\alpha$ along the interconnection links. Specifically, we express
$\lambda_{1}(A+\alpha B)$ as a function of the component network $A_{1}$,
$A_{2}$, and $B$ and their eigenvalues/eigenvectors. (For proofs of theorems
or lemma, see the Appendix.)

\subsection{Special cases}

\label{special cases}

We start with some basic properties related to $\lambda_{1}(A+\alpha B)$ and
examine several special cases in which the relation between $\lambda
_{1}(A+\alpha B)$ and the structural properties of network components $A_{1}$,
$A_{2}$ and $B$ are analytically tractable.

The spectral radius of a sub-network is always smaller or equal to that of the
whole network. Hence,

\begin{lemma}%
\[
\lambda_{1}(A+\alpha B)\geq\lambda_{1}(A)=\max\left(  \lambda_{1}
(A_{1})\text{, }\lambda_{1}(A_{2})\right)
\]

\end{lemma}

\begin{lemma}
\label{lowerboundasymptotic}\
\[
\lambda_{1}(A+\alpha B)\geq\alpha\lambda_{1}(B)
\]

\end{lemma}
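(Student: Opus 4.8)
The plan is to exploit the variational (Rayleigh quotient) characterization of the largest eigenvalue of a symmetric matrix together with the nonnegativity of both $A$ and $B$. Since $A$ and $B$ are symmetric, so is $A+\alpha B$, and for any symmetric matrix $M$ one has $\lambda_{1}(M)=\max_{\|x\|=1}x^{T}Mx$. Thus it suffices to exhibit a single well-chosen unit vector $z$ for which $z^{T}(A+\alpha B)z\geq\alpha\lambda_{1}(B)$, and the natural candidate is the principal eigenvector of $B$.

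First I would let $z$ be a unit-norm eigenvector of $B$ associated with $\lambda_{1}(B)$, so that $Bz=\lambda_{1}(B)z$ and hence $z^{T}Bz=\lambda_{1}(B)$. The Rayleigh bound then gives
\[
\lambda_{1}(A+\alpha B)\geq z^{T}(A+\alpha B)z=z^{T}Az+\alpha z^{T}Bz=z^{T}Az+\alpha\lambda_{1}(B).
\]
The lemma therefore reduces to showing that the cross term $z^{T}Az$ is nonnegative. This is the crux of the argument, and it is where the structural hypotheses enter: the remaining step is to argue $z^{T}Az=\sum_{i,j}a_{ij}z_{i}z_{j}\geq 0$.

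The main obstacle is precisely the sign of $z^{T}Az$, since an arbitrary eigenvector $z$ can have mixed-sign entries, in which case $z^{T}Az$ need not be nonnegative even though $A$ has nonnegative entries. The resolution is to invoke the Perron--Frobenius theorem: because $B$ is a symmetric nonnegative matrix, its spectral radius $\lambda_{1}(B)$ admits an eigenvector $z$ that can be chosen entrywise nonnegative, $z\geq 0$. With this choice every term $a_{ij}z_{i}z_{j}$ is a product of nonnegative quantities (the $a_{ij}\geq 0$ are entries of the adjacency-type matrix $A$), so $z^{T}Az\geq 0$. I would note that this works without requiring $B$ to be irreducible, since the existence of a nonnegative eigenvector for the spectral radius holds for general nonnegative matrices. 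Substituting $z^{T}Az\geq 0$ into the displayed inequality yields $\lambda_{1}(A+\alpha B)\geq\alpha\lambda_{1}(B)$, completing the proof.
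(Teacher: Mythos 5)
Your proof is correct, and it is essentially the paper's argument with the details filled in: the paper simply invokes the fact that the spectral radius of a (weighted) sub-network never exceeds that of the whole network, and your Rayleigh-quotient-plus-Perron--Frobenius computation (choosing the nonnegative principal eigenvector $z$ of $B$ so that $z^{T}Az\geq 0$) is precisely the standard proof of that monotonicity, together with the implicit use of $\alpha\geq 0$. No gaps.
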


The interconnection network $B$ forms a bipartite graph.

\begin{lemma}
\label{bipartite} The largest eigenvalue of a bipartite graph $B=\left[
\begin{array}
[c]{cc}%
\mathbf{0} & B_{12}\\
B_{12}^{T} & \mathbf{0}%
\end{array}
\right]  $ follows $\lambda_{1}(B)=\sqrt{\lambda_{1}\left(  B_{12}^{T}
B_{12}\right)  }$ where $B_{12}$ is possibly asymmetric \cite{PVM_Spectrabook}.
\end{lemma}

\begin{lemma}
\label{Lemma_regular} When $G_{1}$ and $G_{2}$ are both regular graphs with
the same average degree $E[D]$ and when any two nodes from $G_{1}$ and $G_{2}$
respectively are randomly interconnected with probability $p_{I}$, the average
spectral radius of the interconnected networks follows
\[
E[\lambda_{1}(A+\alpha B)]=E[D]+\alpha Np_{I}
\]
if the interdependent connections are not sparse.
\end{lemma}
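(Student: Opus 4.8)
The plan is to sandwich $\lambda_{1}(A+\alpha B)$ between an easy deterministic lower bound from a Rayleigh quotient and an upper bound obtained by replacing the random interconnection by its mean, and then to show that the two agree to leading order once the interconnection is dense. First I would exploit regularity: since $G_{1}$ and $G_{2}$ are $E[D]$-regular, the all-ones vector $\mathbf{1}$ of length $N$ satisfies $A_{1}\mathbf{1}=A_{2}\mathbf{1}=E[D]\mathbf{1}$, so $u=[\mathbf{1};\mathbf{1}]$ is the natural test vector. Evaluating the Rayleigh quotient for a fixed realization of the interconnections gives
\[
\frac{u^{T}(A+\alpha B)u}{u^{T}u}=E[D]+\frac{\alpha}{N}\,\mathbf{1}^{T}B_{12}\mathbf{1}=E[D]+\frac{\alpha L}{N},
\]
where $L=\mathbf{1}^{T}B_{12}\mathbf{1}$ is the number of interconnection links. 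Because $A+\alpha B$ is symmetric, $\lambda_{1}(A+\alpha B)$ dominates this quotient for every realization; taking expectations and using $E[L]=N^{2}p_{I}$ yields the lower bound $E[\lambda_{1}(A+\alpha B)]\geq E[D]+\alpha Np_{I}$.

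For the matching upper bound I would replace the random interconnection by its mean. Set $M=A+\alpha\,E[B]$, where $E[B]$ is the block matrix whose off-diagonal blocks equal $p_{I}J$ with $J$ the $N\times N$ all-ones matrix. A direct computation shows $Mu=(E[D]+\alpha Np_{I})u$, and since $M$ is a nonnegative irreducible matrix with the strictly positive eigenvector $u$, Perron--Frobenius identifies $\lambda_{1}(M)=E[D]+\alpha Np_{I}$ exactly. It then remains to control the deviation $A+\alpha B-M=\alpha(B-E[B])$. By Weyl's inequality, $\lambda_{1}(A+\alpha B)\leq\lambda_{1}(M)+\alpha\|B-E[B]\|$, and the centered matrix $B-E[B]$ has spectral norm equal to $\|B_{12}-p_{I}J\|$.

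The concentration step is where the main difficulty lies. The hard part will be bounding $\|B_{12}-p_{I}J\|$, the operator norm of a matrix of independent centered Bernoulli$(p_{I})$ entries, which by standard random-matrix estimates is of order $\sqrt{Np_{I}}$ rather than $Np_{I}$. The hypothesis that the interconnections are \emph{not sparse}, i.e.\ $Np_{I}\to\infty$, is precisely what forces $\sqrt{Np_{I}}=o(Np_{I})$, so the fluctuation is negligible relative to the leading term. Taking expectations in the Weyl bound and combining with the first step gives
\[
E[D]+\alpha Np_{I}\leq E[\lambda_{1}(A+\alpha B)]\leq E[D]+\alpha Np_{I}+O\!\left(\alpha\sqrt{Np_{I}}\right),
\]
and the error term is of lower order, establishing $E[\lambda_{1}(A+\alpha B)]=E[D]+\alpha Np_{I}$ in the dense regime. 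The principal obstacle is thus the random-matrix norm estimate and its interaction with the expectation: one must either invoke a concentration inequality for the largest eigenvalue (using that $\lambda_{1}$ depends $1$-Lipschitzly on the matrix) or a direct expected-operator-norm bound in order to convert the per-realization estimate into the stated statement about $E[\lambda_{1}]$.
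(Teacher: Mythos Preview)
Your argument is correct and reaches the same conclusion, but the route for the upper bound differs from the paper's. The paper sandwiches $\lambda_{1}(W)$ between the average node strength $E[S]$ and the maximum node strength $\max_{r}s_{r}$ (the Gershgorin-type bound of its Lemma~\ref{boundlamda}), noting that these coincide whenever the weighted graph is regular; it then argues, rather informally, that with dense random interconnections every node's weighted degree concentrates around $E[D]+\alpha Np_{I}$, so the coupled network is ``asymptotically regular'' and the two bounds merge. Your lower bound is effectively the same device, since $u^{T}Wu/u^{T}u$ with $u$ the all-ones vector is precisely the average node strength. Your upper bound, however, is genuinely different: instead of controlling the maximum row sum via degree concentration, you replace $B$ by its mean, identify $\lambda_{1}(M)=E[D]+\alpha Np_{I}$ exactly through Perron--Frobenius, and absorb the fluctuation $B-E[B]$ spectrally via Weyl's inequality and a random-matrix operator-norm estimate. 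The paper's row-sum argument is shorter and entirely elementary but leaves the error rate implicit; your approach is more technical yet delivers an explicit $O(\alpha\sqrt{Np_{I}})$ correction, which in fact is slightly sharper than what the max-row-sum route would give (that would carry an extra logarithmic factor from the maximum of $N$ binomials).
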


A dense Erd\H{o}s-R\'{e}nyi (ER) random network approaches a regular network
when $N$ is large. Lemma \ref{Lemma_regular}, thus, can be applied as well to
cases where both $G_{1}$ and $G_{2}$ are dense ER random networks.

For any two commuting matrices $A$ and $B$, thus $AB=BA$, $\lambda
_{1}(A+B)=\lambda_{1}(A)+\lambda_{1}(B)$ \cite{PVM_Spectrabook}. This property
of commuting matrices makes the following two special cases analytically tractable.

\begin{lemma}
\label{aaii} When $A+\alpha B=\left[
\begin{array}
[c]{cc}%
A_{1} & \mathbf{0}\\
\mathbf{0} & A_{1}%
\end{array}
\right]  +\alpha\left[
\begin{array}
[c]{cc}%
\mathbf{0} & I\\
I & \mathbf{0}%
\end{array}
\right]  $, i.e., the interconnected networks are composed of two identical
networks, where one network is indexed from $1$ to $N$ and the other from
$N+1$ to $2N$, with an interconnecting link between each so-called image node
pair $(i,N+i)$ from the two individual networks respectively, its largest
eigenvalue $\lambda_{1}(A+\alpha B)=\lambda_{1}(A)+\alpha$.
\end{lemma}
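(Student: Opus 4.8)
The plan is to exploit the commuting-matrix identity that the text has just recalled: for symmetric matrices satisfying $AB=BA$, one has $\lambda_{1}(A+\alpha B)=\lambda_{1}(A)+\lambda_{1}(\alpha B)=\lambda_{1}(A)+\alpha\lambda_{1}(B)$. First I would verify that $A$ and $B$ commute in the present situation by a direct block computation: with $A=\left[\begin{smallmatrix}A_{1} & \mathbf{0}\\ \mathbf{0} & A_{1}\end{smallmatrix}\right]$ and $B=\left[\begin{smallmatrix}\mathbf{0} & I\\ I & \mathbf{0}\end{smallmatrix}\right]$, both products $AB$ and $BA$ equal $\left[\begin{smallmatrix}\mathbf{0} & A_{1}\\ A_{1} & \mathbf{0}\end{smallmatrix}\right]$, so $AB=BA$.

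Next I would evaluate the two spectral radii. Since $A$ is block-diagonal with identical blocks, Lemma~1 gives $\lambda_{1}(A)=\lambda_{1}(A_{1})$. For the interconnection matrix, Lemma~\ref{bipartite} applies with $B_{12}=I$, so $\lambda_{1}(B)=\sqrt{\lambda_{1}(I^{T}I)}=\sqrt{\lambda_{1}(I)}=1$. Substituting these into the commuting identity yields $\lambda_{1}(A+\alpha B)=\lambda_{1}(A)+\alpha$, which is exactly the claim.

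The main obstacle is that the commuting identity $\lambda_{1}(A+\alpha B)=\lambda_{1}(A)+\alpha\lambda_{1}(B)$ is \emph{not} valid for arbitrary commuting symmetric matrices: it can fail when the leading eigenvectors of $A$ and of $B$ do not coincide. I would therefore confirm that the leading eigenspaces genuinely overlap here. Letting $x>0$ be the Perron eigenvector of $A_{1}$, so that $A_{1}x=\lambda_{1}(A_{1})x$, the vector $\left[\begin{smallmatrix}x\\ x\end{smallmatrix}\right]$ lies in the $\lambda_{1}(A)$-eigenspace of $A$ and simultaneously in the $(+1)$-eigenspace of $B$; hence $(A+\alpha B)\left[\begin{smallmatrix}x\\ x\end{smallmatrix}\right]=(\lambda_{1}(A_{1})+\alpha)\left[\begin{smallmatrix}x\\ x\end{smallmatrix}\right]$. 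Because this eigenvector is nonnegative, Perron--Frobenius identifies $\lambda_{1}(A_{1})+\alpha$ as the spectral radius, which removes the concern.

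As a fully self-contained alternative that sidesteps any reliance on the cited identity, I would simultaneously diagonalize $A$ and $B$. Writing $\{(\mu_{k},z_{k})\}$ for the spectral pairs of the symmetric matrix $A_{1}$, the vectors $\tfrac{1}{\sqrt{2}}\left[\begin{smallmatrix}z_{k}\\ z_{k}\end{smallmatrix}\right]$ and $\tfrac{1}{\sqrt{2}}\left[\begin{smallmatrix}z_{k}\\ -z_{k}\end{smallmatrix}\right]$ form a common orthonormal eigenbasis, so the spectrum of $A+\alpha B$ is exactly $\{\mu_{k}+\alpha\}\cup\{\mu_{k}-\alpha\}$. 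Since $\alpha\geq 0$, the maximum of these values is $\max_{k}\mu_{k}+\alpha=\lambda_{1}(A_{1})+\alpha$, giving the result for every $\alpha$ and with no connectivity assumption on $A_{1}$. I would likely present the commuting-matrix route in the main text and keep this diagonalization as the rigorous justification.
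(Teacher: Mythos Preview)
Your main route is exactly the paper's: verify $AB=BA$ by the block computation, invoke the commuting-matrix identity $\lambda_{1}(A+\alpha B)=\lambda_{1}(A)+\alpha\lambda_{1}(B)$, and observe $\lambda_{1}(B)=1$ (the paper phrases this as ``$B$ is a set of isolated links''). Where you go beyond the paper is in flagging that the identity as stated is not valid for arbitrary commuting symmetric matrices and then closing the gap---either by exhibiting the common Perron eigenvector $\left[\begin{smallmatrix}x\\ x\end{smallmatrix}\right]$ or by the full simultaneous diagonalization giving the spectrum $\{\mu_{k}\pm\alpha\}$; the paper simply applies the cited identity without this check, so your version is the more rigorous one.
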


\begin{proof}
When $A+\alpha B=\left[
\begin{array}
[c]{cc}%
A_{1} & \mathbf{0}\\
\mathbf{0} & A_{1}%
\end{array}
\right]  +\alpha\left[
\begin{array}
[c]{cc}%
\mathbf{0} & I\\
I & \mathbf{0}%
\end{array}
\right]  ,$ matrix $A$ and $\alpha B$ are commuting
\[
A\cdot\alpha B=\alpha\left[
\begin{array}
[c]{cc}%
\mathbf{0} & A_{1}\\
A_{1} & \mathbf{0}%
\end{array}
\right]  =\alpha BA
\]
Therefore, $\lambda_{1}(A+\alpha B)=\lambda_{1}(A)+\lambda_{1}(\alpha
B)=\lambda_{1}(A_{1})+\alpha\lambda_{1}(B)$. The network $B$ is actually a set
of isolated links. Hence, $\lambda_{1}(B)=1$.
\end{proof}

\begin{lemma}
\label{aaaa}When $A+\alpha B=\left[
\begin{array}
[c]{cc}%
A_{1} & \mathbf{0}\\
\mathbf{0} & A_{1}%
\end{array}
\right]  +\alpha\left[
\begin{array}
[c]{cc}%
\mathbf{0} & A_{1}\\
A_{1} & \mathbf{0}%
\end{array}
\right]  $, its largest eigenvalue $\lambda_{1}(A+\alpha B)=\left(
1+\alpha\right)  \lambda_{1}(A_{1})$.
\end{lemma}

\begin{proof}
When $A+\alpha B=\left[
\begin{array}
[c]{cc}%
A_{1} & \mathbf{0}\\
\mathbf{0} & A_{1}%
\end{array}
\right]  +\alpha\left[
\begin{array}
[c]{cc}%
\mathbf{0} & A_{1}\\
A_{1} & \mathbf{0}%
\end{array}
\right]  ,$ matrix $A$ and $\alpha B$ are commuting
\[
A\cdot\alpha B=\alpha\left[
\begin{array}
[c]{cc}%
\mathbf{0} & A_{1}^{2}\\
A_{1}^{2} & \mathbf{0}%
\end{array}
\right]  =\alpha BA
\]
Therefore $\lambda_{1}(A+\alpha B)=\lambda_{1}(A)+\lambda_{1}(\alpha
B)=\left(  1+\alpha\right)  \lambda_{1}(A)=\left(  1+\alpha\right)
\lambda_{1}(A_{1})$.
\end{proof}

When $A$ and $B$ are not commuting, little can be known about the eigenvalues
of $\lambda_{1}(A+\alpha B)$, given the spectrum of $A$ and of $B$. For
example, even when the eigenvalue of $A$ and $B$ are known and bounded, the
largest eigenvalue of $\lambda_{1}(A+\alpha B)$ can be unbounded
\cite{PVM_Spectrabook}.

\subsection{Lower bounds for $\lambda_{1}\left(  A+\alpha B\right)  $}

We now denote matrix $A+\alpha B$ to be $W$. Applying the Rayleigh inequality
\cite[p. 223]{PVM_Spectrabook} to the symmetric matrix $W=A+\alpha B$ yields
\[
\frac{z^{T}Wz}{z^{T}z}\leq\lambda_{1}\left(  W\right)
\]
where equality holds only if $z$ is the principal eigenvector of $W$.

\begin{theorem}
\label{theorem_lowerbound1} The best possible lower bound $\frac{z^{T}Wz}
{z^{T}z}$ of interdependent networks $W$ by choosing $z$ as the linear
combination of $x$ and $y$, the largest eigenvector of $A_{1}$ and $A_{2}$
respectively, is
\begin{equation}
\lambda_{1}\left(  W\right)  \geq\max\left(  \lambda_{1}\left(  A_{1}\right)
,\lambda_{1}\left(  A_{2}\right)  \right)  +\left(  \sqrt{\left(
\frac{\lambda_{1}\left(  A_{1}\right)  -\lambda_{1}\left(  A_{2}\right)  }
{2}\right)  ^{2}+\xi^{2}}-\left\vert \frac{\lambda_{1}\left(  A_{1}\right)
-\lambda_{1}\left(  A_{2}\right)  }{2}\right\vert \right)  \label{lowerbound1}%
\end{equation}
where $\xi=\alpha x^{T}B_{12}y$.
\end{theorem}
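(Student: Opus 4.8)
The plan is to apply the Rayleigh inequality with a trial vector drawn from the two-dimensional space spanned by the principal eigenvectors of the two component networks, embedded according to the block structure of $W$. Let $x$ denote the normalized principal eigenvector of $A_{1}$ (so $A_{1}x=\lambda_{1}(A_{1})x$, $x^{T}x=1$) and $y$ the normalized principal eigenvector of $A_{2}$, and set
\[
z=\left[\begin{array}{c}c_{1}x\\ c_{2}y\end{array}\right].
\]
Because $x$ occupies the first index block and $y$ the second, the two embedded vectors are orthonormal, so $z^{T}z=c_{1}^{2}+c_{2}^{2}$. This is exactly the family of linear combinations of $x$ and $y$ referred to in the statement, and the task is to choose $c_{1},c_{2}$ optimally.

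Next I would compute $Wz$ block by block. The diagonal part $A$ acts as $A_{1}x=\lambda_{1}(A_{1})x$ and $A_{2}y=\lambda_{1}(A_{2})y$, while the off-diagonal part $\alpha B$ sends the two blocks into one another through $B_{12}$ and $B_{12}^{T}$. Forming $z^{T}Wz$ and using that $y^{T}B_{12}^{T}x=x^{T}B_{12}y$ (a scalar equals its transpose) to merge the two cross terms, the quadratic form collapses to
\[
z^{T}Wz=c_{1}^{2}\lambda_{1}(A_{1})+c_{2}^{2}\lambda_{1}(A_{2})+2c_{1}c_{2}\,\alpha x^{T}B_{12}y .
\]
With $\xi=\alpha x^{T}B_{12}y$, the Rayleigh quotient $z^{T}Wz/z^{T}z$ is precisely the Rayleigh quotient, in the variable $(c_{1},c_{2})$, of the $2\times 2$ symmetric matrix
\[
M=\left[\begin{array}{cc}\lambda_{1}(A_{1}) & \xi\\ \xi & \lambda_{1}(A_{2})\end{array}\right].
\]

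The key observation is that $M$ is the compression $P^{T}WP$ of $W$ onto the span of the two embedded eigenvectors, where $P$ has these two orthonormal vectors as columns; hence the supremum of $z^{T}Wz/z^{T}z$ over all such $z$ equals $\lambda_{1}(M)$, which is the \emph{best possible} lower bound from this family. Computing the top eigenvalue of the $2\times 2$ matrix gives
\[
\lambda_{1}(M)=\frac{\lambda_{1}(A_{1})+\lambda_{1}(A_{2})}{2}+\sqrt{\left(\frac{\lambda_{1}(A_{1})-\lambda_{1}(A_{2})}{2}\right)^{2}+\xi^{2}},
\]
and rewriting the average as $\frac{\lambda_{1}(A_{1})+\lambda_{1}(A_{2})}{2}=\max\left(\lambda_{1}(A_{1}),\lambda_{1}(A_{2})\right)-\left\vert\frac{\lambda_{1}(A_{1})-\lambda_{1}(A_{2})}{2}\right\vert$ (verified by splitting into the cases $\lambda_{1}(A_{1})\gtrless\lambda_{1}(A_{2})$) turns this into the stated form (\ref{lowerbound1}).

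The individual computations are routine; the only genuine content is the reduction step. The point I would be most careful about is justifying the word \emph{best possible}: it requires recognizing that restricting the Rayleigh quotient of the $2N\times 2N$ matrix $W$ to this two-dimensional subspace is literally a $2\times 2$ eigenvalue problem, so that the optimum over all $c_{1},c_{2}$ is the largest eigenvalue of $M$ rather than merely the value at some convenient choice. The rest is bookkeeping — keeping the block action of $B_{12}$ and $B_{12}^{T}$ straight so that the cross terms combine into the single factor $2\xi$, and confirming the algebraic identity that rewrites the mean of the two eigenvalues in terms of $\max$ and an absolute value.
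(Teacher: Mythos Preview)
Your proposal is correct and follows essentially the same approach as the paper: both restrict the Rayleigh quotient of $W$ to the two-dimensional subspace spanned by the embedded principal eigenvectors, compute $z^{T}Wz=C_{1}^{2}\lambda_{1}(A_{1})+C_{2}^{2}\lambda_{1}(A_{2})+2C_{1}C_{2}\xi$, and optimize over the coefficients. The only cosmetic difference is that the paper carries out the optimization via Lagrange multipliers (arriving at the $2\times2$ determinantal equation), whereas you recognize directly that the restricted quotient is the Rayleigh quotient of the compression $M=P^{T}WP$, so its maximum is $\lambda_{1}(M)$; both routes land on the same $2\times2$ eigenvalue and the same algebraic rewriting.
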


When $\alpha=0$, the lower bound becomes the exact solution $\lambda
_{1}\left(  W\right)  =\lambda_{L}$. When the two individual networks have the
same largest eigenvalue $\lambda_{1}\left(  A_{1}\right)  =\lambda_{1}\left(
A_{2}\right)  $, we have
\[
\lambda_{1}\left(  W\right)  \geq\lambda_{1}\left(  A_{1}\right)  +\alpha
x^{T}B_{12}y
\]

\begin{theorem}
\label{theorem_lowerbound2} The best possible lower bound $\lambda_{1}%
^{2}\left(  W\right)  \geq\frac{z^{T}W^{2}z}{z^{T}z}$ by choosing $z$ as the
linear combination of $x$ and $y$, the largest eigenvector of $A_{1}$ and
$A_{2}$ respectively, is
\begin{align}
\lambda_{1}^{2}\left(  W\right)   &  \geq\frac{\left(  \lambda_{1}^{2}\left(
A_{1}\right)  +\alpha^{2}\left\Vert B_{12}^{T}x\right\Vert _{2}^{2}
+\lambda_{1}^{2}\left(  A_{2}\right)  +\alpha^{2}\left\Vert B_{12}y\right\Vert
_{2}^{2}\right)  }{2}+\label{lowerbound2}\\
&  \sqrt{\left(  \frac{\lambda_{1}^{2}\left(  A_{1}\right)  +\alpha
^{2}\left\Vert B_{12}^{T}x\right\Vert _{2}^{2}-\lambda_{1}^{2}\left(
A_{2}\right)  -\alpha^{2}\left\Vert B_{12}y\right\Vert _{2}^{2}}{2}\right)
^{2}+\theta^{2}}\nonumber
\end{align}
where $\theta=\alpha\left(  \lambda_{1}\left(  A_{1}\right)  +\lambda
_{1}\left(  A_{2}\right)  \right)  x^{T}B_{12}y$.
\end{theorem}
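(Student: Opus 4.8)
The plan is to apply the Rayleigh inequality to the symmetric matrix $W^{2}$ and to restrict the test vector $z$ to the two-dimensional subspace spanned by the natural block embeddings of $x$ and $y$. Let $\tilde{x}$ be the $2N$-vector whose first $N$ components equal $x$ and whose last $N$ components vanish, and let $\tilde{y}$ be the $2N$-vector supported on the second block with components $y$; I normalize so that $\left\Vert x\right\Vert _{2}=\left\Vert y\right\Vert _{2}=1$ and set $z=c_{1}\tilde{x}+c_{2}\tilde{y}$. Since $W=A+\alpha B$ is a non-negative symmetric matrix, Perron--Frobenius gives $\lambda_{1}(W^{2})=\lambda_{1}^{2}(W)$, so the Rayleigh inequality reads $\lambda_{1}^{2}(W)\geq z^{T}W^{2}z/z^{T}z$ for every choice of $c_{1},c_{2}$.

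First I would compute the images $W\tilde{x}$ and $W\tilde{y}$, using the block form
\[
W=\left[
\begin{array}{cc}
A_{1} & \alpha B_{12}\\
\alpha B_{12}^{T} & A_{2}
\end{array}
\right]
\]
together with $A_{1}x=\lambda_{1}(A_{1})x$ and $A_{2}y=\lambda_{1}(A_{2})y$. This yields, in the two blocks, $W\tilde{x}=[\lambda_{1}(A_{1})x;\,\alpha B_{12}^{T}x]$ and $W\tilde{y}=[\alpha B_{12}y;\,\lambda_{1}(A_{2})y]$. Reading off the three quadratic forms as inner products of these images then gives $\tilde{x}^{T}W^{2}\tilde{x}=\left\Vert W\tilde{x}\right\Vert _{2}^{2}=\lambda_{1}^{2}(A_{1})+\alpha^{2}\left\Vert B_{12}^{T}x\right\Vert _{2}^{2}$, the symmetric expression $\tilde{y}^{T}W^{2}\tilde{y}=\lambda_{1}^{2}(A_{2})+\alpha^{2}\left\Vert B_{12}y\right\Vert _{2}^{2}$, and the cross term $\tilde{x}^{T}W^{2}\tilde{y}=(W\tilde{x})^{T}(W\tilde{y})=\alpha\left(\lambda_{1}(A_{1})+\lambda_{1}(A_{2})\right)x^{T}B_{12}y=\theta$, recovering exactly the quantity named in the statement.

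The final step is to optimize over $c_{1},c_{2}$. Because $\tilde{x}$ and $\tilde{y}$ are supported on disjoint blocks they are orthogonal, so $z^{T}z=c_{1}^{2}+c_{2}^{2}$ and the Rayleigh quotient equals the quotient of the quadratic form of the $2\times 2$ symmetric matrix
\[
M=\left[
\begin{array}{cc}
P & \theta\\
\theta & Q
\end{array}
\right],\qquad P=\lambda_{1}^{2}(A_{1})+\alpha^{2}\left\Vert B_{12}^{T}x\right\Vert _{2}^{2},\quad Q=\lambda_{1}^{2}(A_{2})+\alpha^{2}\left\Vert B_{12}y\right\Vert _{2}^{2},
\]
against $c_{1}^{2}+c_{2}^{2}$. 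The best possible bound over this subspace is therefore $\lambda_{1}(M)$, and substituting the closed form $\lambda_{1}(M)=\tfrac{P+Q}{2}+\sqrt{(\tfrac{P-Q}{2})^{2}+\theta^{2}}$ produces exactly (\ref{lowerbound2}).

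I expect the main obstacle to be the bookkeeping in the cross term $\tilde{x}^{T}W^{2}\tilde{y}$: one must expand $(W\tilde{x})^{T}(W\tilde{y})$ across both blocks and use $(B_{12}^{T}x)^{T}y=x^{T}B_{12}y$ to merge the two contributions $\lambda_{1}(A_{1})\alpha\,x^{T}B_{12}y$ and $\alpha\lambda_{1}(A_{2})x^{T}B_{12}y$ into a single $\theta$. A secondary point worth making explicit is why the supremum over $c_{1},c_{2}$ is the top eigenvalue of $M$ rather than some convenient ad hoc choice; this is what justifies the phrase ``best possible,'' and it follows from the Rayleigh characterization applied once more to the reduced $2\times 2$ problem.
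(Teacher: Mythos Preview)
Your proposal is correct and follows essentially the same approach as the paper: both restrict the Rayleigh quotient of $W^{2}$ to the span of the block-embedded eigenvectors, compute the same three quadratic forms $P$, $Q$, and $\theta$, and then maximize over the coefficients. The only cosmetic differences are that you compute $W\tilde{x}$ and $W\tilde{y}$ first and pair them, whereas the paper expands the block form of $W^{2}$ directly, and you invoke the $2\times2$ Rayleigh characterization to read off $\lambda_{1}(M)$, whereas the paper carries out the Lagrange-multiplier optimization explicitly and arrives at the identical $\det(M-\lambda_{L}I)=0$ condition.
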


In general,
\[
\frac{z^{T}W^{k}z}{z^{T}z}\leq\lambda_{1}^{k}\left(  W\right)
\]
The largest eigenvalue is lower bounded by
\[
\left(  \frac{z^{T}W^{k}z}{z^{T}z}\right)  ^{1/k}\leq\lambda_{1}\left(
W\right)
\]

\begin{theorem}
\label{theorem_higherorder} Given a vector $z$, $\left(  \frac{z^{T}W^{s}
z}{z^{T}z}\right)  ^{1/s}\leq\left(  \frac{z^{T}W^{k}z}{z^{T}z}\right)
^{1/k}$when $k$ is an even integer and $0<s<k$. Furthermore,
\[
lim_{k\rightarrow\infty}\left(  \frac{z^{T}W^{k}z}{z^{T}z}\right)
^{1/k}=\lambda_{1}\left(  W\right)
\]

\end{theorem}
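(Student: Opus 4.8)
The plan is to reduce the entire statement to the spectral decomposition of the symmetric matrix $W=A+\alpha B$. Since $W$ is real symmetric, I would write $W=\sum_{i=1}^{2N}\lambda_i u_i u_i^T$ with real eigenvalues $\lambda_1\ge\lambda_2\ge\cdots$ and orthonormal eigenvectors $u_i$. Expanding the given vector as $z=\sum_i c_i u_i$ with $c_i=u_i^T z$ gives $z^Tz=\sum_i c_i^2$ and $z^TW^mz=\sum_i \lambda_i^m c_i^2$ for every nonnegative integer $m$ (note $W^s$ forces $s$ to be a positive integer). Introducing the weights $p_i=c_i^2/\sum_j c_j^2$, which are nonnegative and sum to $1$, turns the quotient into a moment, $\frac{z^TW^mz}{z^Tz}=\sum_i p_i\lambda_i^m=E[\lambda^m]$, where $\lambda$ is the random variable taking value $\lambda_i$ with probability $p_i$. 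Both assertions then become statements about power means of this distribution.

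For the monotonicity part, I would first use that $k$ is even, so $\lambda^k=|\lambda|^k\ge 0$ and $\big(\frac{z^TW^kz}{z^Tz}\big)^{1/k}=(E[|\lambda|^k])^{1/k}$ is well defined. I then invoke the power-mean inequality (equivalently, monotonicity of $L^p$ norms on a probability space): for the convex map $t\mapsto t^{k/s}$ on $[0,\infty)$ with $k/s\ge 1$, Jensen's inequality gives $(E[|\lambda|^s])^{k/s}\le E[|\lambda|^k]$, hence $(E[|\lambda|^s])^{1/s}\le (E[|\lambda|^k])^{1/k}$. Since $\lambda_i^s\le|\lambda_i|^s$ term by term, $\frac{z^TW^sz}{z^Tz}=\sum_i p_i\lambda_i^s\le E[|\lambda|^s]$, and chaining the two inequalities yields the claim. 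The one point needing care is the sign: for odd $s$ the left-hand quotient may be negative, but then its real (odd) $s$-th root is negative while the right-hand side is nonnegative, so the inequality is trivial; this is precisely why the hypothesis only constrains $k$ to be even.

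For the limit I would set $M=\max\{|\lambda_i|:p_i>0\}$. The upper bound $(E[|\lambda|^k])^{1/k}\le M$ is immediate since each $|\lambda_i|\le M$. For the matching lower bound, pick an index $j$ attaining $M$ with $p_j>0$; then $E[|\lambda|^k]\ge p_j M^k$, so $(E[|\lambda|^k])^{1/k}\ge p_j^{1/k}M\to M$ as $k\to\infty$ because $p_j^{1/k}\to 1$. Hence $\lim_{k\to\infty}\big(\frac{z^TW^kz}{z^Tz}\big)^{1/k}=M$.

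The remaining---and genuinely essential---step is to identify $M$ with $\lambda_1(W)$. Here I would use that $W=A+\alpha B$ is a nonnegative symmetric matrix that is irreducible on a connected interconnected network; by the Perron--Frobenius theorem its largest eigenvalue $\lambda_1(W)$ equals the spectral radius $\max_i|\lambda_i|$, and $u_1$ may be taken strictly positive. Thus $M=\lambda_1(W)$ provided the weight $p_1$ on the Perron eigenvalue is positive, i.e. $c_1=u_1^Tz\ne 0$. For the vectors $z$ used in this paper---nonnegative combinations of the nonnegative principal eigenvectors $x$ and $y$---positivity of $u_1$ forces $c_1>0$, so the overlap with the Perron direction never vanishes. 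I expect this last point---ensuring a nonzero projection onto the principal eigenvector, so that the limit is $\lambda_1(W)$ rather than some smaller $|\lambda_i|$---to be the main obstacle, since the two power-mean inequalities are otherwise routine.
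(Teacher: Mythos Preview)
Your approach is essentially the same as the paper's: both expand $z$ in the orthonormal eigenbasis of $W$, rewrite the Rayleigh-type quotients as moments $\sum_i c_i^2\lambda_i^m$ of a discrete probability distribution, and then invoke the monotonicity of power means (the paper names it Lyapunov's inequality, you derive it via Jensen) together with the evenness of $k$ to absorb absolute values. Your treatment of the limit is in fact more careful than the paper's, which simply asserts the conclusion from $\sum_i c_i^2(\lambda_i/\lambda_1)^k$ without discussing the overlap condition $c_1\neq 0$ or the identification of $\max_i|\lambda_i|$ with $\lambda_1(W)$; your Perron--Frobenius remark fills exactly that gap.
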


Hence, given a vector $z$, we could further improve the lower bound $\left(
\frac{z^{T}W^{k}z}{z^{T}z}\right)  ^{1/k}$ by taking a higher even power $k$.
Note that Theorem \ref{theorem_lowerbound1} and \ref{theorem_lowerbound2}
express the lower bound as a function of component network $A_{1}$, $A_{2}$
and $B$ and their eigenvalues/eigenvectors, which illustrates the effect of
component network features on the epidemic characterizer $\lambda_{1}\left(
W\right)  $.

\subsection{Upper bound for $\lambda_{1}\left(  A+\alpha B\right)  $}

\begin{theorem}
\label{theorem_upperbound} The largest eigenvalue of interdependent networks
$\lambda_{1}(W)$ is upper bounded by
\begin{align}
\lambda_{1}(W)  &  \leq\max{(\lambda_{1}(A_{1}),\lambda_{1}(A_{2}))}%
+\alpha\lambda_{1}(B)\label{upperbound}\\
&  =\max{(\lambda_{1}(A_{1}),\lambda_{1}(A_{2}))}+\alpha\sqrt{\lambda
_{1}(B_{12}B_{12}^{T})}%
\end{align}

\end{theorem}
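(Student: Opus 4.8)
The plan is to exploit the subadditivity of the largest eigenvalue of a symmetric matrix, which follows directly from the Rayleigh characterization already invoked in the paper. Writing $W=A+\alpha B$ and letting $z$ denote a unit-norm principal eigenvector of $W$, I would start from the identity
\[
\lambda_{1}(W)=z^{T}Wz=z^{T}Az+\alpha\,z^{T}Bz.
\]
Since $z$ has unit norm, the Rayleigh inequality \cite[p. 223]{PVM_Spectrabook} bounds each term separately, $z^{T}Az\leq\lambda_{1}(A)$ and $z^{T}Bz\leq\lambda_{1}(B)$, so that (using $\alpha\geq0$) we obtain $\lambda_{1}(W)\leq\lambda_{1}(A)+\alpha\lambda_{1}(B)$. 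This is the only genuinely analytic step; the remainder is the identification of the two eigenvalues on the right in terms of the component networks.

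Next I would identify $\lambda_{1}(A)$. Because $A$ is block diagonal with diagonal blocks $A_{1}$ and $A_{2}$, its spectrum is the union of the spectra of $A_{1}$ and $A_{2}$; in particular $\lambda_{1}(A)=\max(\lambda_{1}(A_{1}),\lambda_{1}(A_{2}))$, which supplies the first summand of the claimed bound. For the second summand, Lemma \ref{bipartite} already gives $\lambda_{1}(B)=\sqrt{\lambda_{1}(B_{12}^{T}B_{12})}$. To match the form displayed in the theorem I would use the standard fact that $B_{12}^{T}B_{12}$ and $B_{12}B_{12}^{T}$ share the same nonzero eigenvalues (they encode the same singular values of $B_{12}$), whence $\lambda_{1}(B_{12}^{T}B_{12})=\lambda_{1}(B_{12}B_{12}^{T})$ and therefore $\alpha\lambda_{1}(B)=\alpha\sqrt{\lambda_{1}(B_{12}B_{12}^{T})}$. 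Substituting both identifications yields (\ref{upperbound}).

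There is no real obstacle here beyond bookkeeping, so I would instead flag where the bound loses sharpness, since this clarifies its role alongside the lower bounds. The single inequality in the argument bounds $z^{T}Az$ and $z^{T}Bz$ by maximizing each quadratic form independently, whereas the principal eigenvector of $A$ (supported on whichever of $G_{1},G_{2}$ has the larger spectral radius) and the principal eigenvector of $B$ (spread across both sides of the bipartition) are in general different vectors. The upper bound is thus attained only when a single $z$ simultaneously maximizes both forms, as happens in the commuting cases of Lemma \ref{aaii} and Lemma \ref{aaaa}, which explains why (\ref{upperbound}) complements the lower bounds of Theorem \ref{theorem_lowerbound1} and Theorem \ref{theorem_lowerbound2} rather than coinciding with them for general $\alpha$.
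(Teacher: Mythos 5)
Your proposal is correct and follows essentially the same route as the paper: both arguments rest on the subadditivity $\lambda_{1}(A+\alpha B)\leq\lambda_{1}(A)+\alpha\lambda_{1}(B)$ obtained from the Rayleigh characterization (the paper bounds $\max_{z}(z^{T}Az+\alpha z^{T}Bz)$ by maximizing the two quadratic forms independently, which is the same inequality you get by evaluating at the principal eigenvector of $W$ and bounding each term), followed by the identifications $\lambda_{1}(A)=\max(\lambda_{1}(A_{1}),\lambda_{1}(A_{2}))$ and $\lambda_{1}(B)=\sqrt{\lambda_{1}(B_{12}B_{12}^{T})}$. The only cosmetic difference is that the paper rederives the bipartite identity for $\lambda_{1}(B)$ inside the proof, whereas you cite Lemma \ref{bipartite} together with the equality of the nonzero spectra of $B_{12}^{T}B_{12}$ and $B_{12}B_{12}^{T}$.
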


This upper bound is reached when the principal eigenvector of $B_{12}%
B_{12}^{T}$ coincides with the principal eigenvector of ${A_{1}}$ if
${\lambda}_{1}{(A_{1})\geq\lambda}_{1}{(A_{2})}$ and when the principal
eigenvector of $B_{12}^{T}B_{12}$ coincides with the principal eigenvector of
${A_{2}}$ if ${\lambda}_{1}{(A_{1})\leq\lambda}_{1}{(A_{2}).}$

\subsection{Perturbation analysis for small and large $\alpha$}

In this subsection, we derive the perturbation approximation of $\lambda
_{1}(W)$ for small and large $\alpha$, respectively, as a function of
component networks and their eigenvalues/eigenvectors.

We start with small $\alpha$ cases. The problem is to find the largest
eigenvalue $\sup_{z\neq0}\frac{z^{T}Wz}{z^{T}z}$ of $W$, with the condition
that
\[
\left\{
\begin{array}
[c]{c}%
(W-\lambda I)z=0\\
z^{T}z=1
\end{array}
\right.
\]
When the solution is analytical in $\alpha$, we express $\lambda$ and $z$ by
Taylor expansion as
\begin{align*}
\lambda &  =\sum_{k=0}^{\infty}\lambda^{(k)}\alpha^{k}\\
z  &  =\sum_{k=0}^{\infty}z^{(k)}\alpha^{k}%
\end{align*}
Substituting the expansion in the eigenvalue equation gives
\[
(A+\alpha B)\sum_{k=0}^{\infty}z^{(k)}\alpha^{k}=\sum_{k=0}^{\infty}
\lambda^{(k)}\alpha^{k}\sum_{k=0}^{\infty}z^{(k)}\alpha^{k}
\]
where all the coefficients of $\alpha^{k}$ on the left must equal those on the
right. Performing the products and reordering the series we obtain
\[
\sum_{k=0}^{\infty}\left(  Az^{(k)}+Bz^{(k-1)}\right)  \alpha^{k}=\sum
_{k=0}^{\infty}\left(  \sum_{i=0}^{k}\lambda^{(k-i)}z^{(i)}\right)
\alpha^{k}
\]
This leads to a hierarchy of equations
\[
Az^{(k)}+Bz^{(k-1)}=\sum_{i=0}^{k}\lambda^{(k-i)}z^{(i)}
\]
The same expansion must meet the normalization condition
\[
z^{T}z=1
\]
or equivalently,
\[
\left(  \sum_{k=0}^{\infty}z^{(k)}\alpha^{k},\sum_{j=0}^{\infty}z^{(j)}
\alpha^{j}\right)  =1
\]
where $(u,v)=\sum_{i}u_{i}v_{i}$ represents the scalar product. The
normalization condition leads to a set of equations
\begin{equation}
\sum_{i=1}^{k}\left(  z^{(k-i)},z^{(i)}\right)  =0 \label{AAA}%
\end{equation}
for any $k\geq1$ and $\left(  z^{(0)},z^{(0)}\right)  =1$.

Let $\lambda_{1}(A_{1})\left(  \lambda_{1}(A_{2})\right)  $ and $x(y)$ denote
the largest eigenvalue and the corresponding eigenvector of $A_{1}(A_{2})$
respectively. We examine two possible cases: (a) the non-degenerate case when
$\lambda_{1}(A_{1})>\lambda_{1}(A_{2})$ and (b) the degenerate case when
$\lambda_{1}(A_{1})=\lambda_{1}(A_{2})$ and the case $\lambda_{1}
(A_{1})<\lambda_{1}(A_{2})$ is equivalent to the first.

\begin{theorem}
\label{theorem_perturbation_nondegenerate} For small $\alpha$, in the
non-degenerate case, thus when $\lambda_{1}(A_{1})>\lambda_{1}(A_{2})$,
\begin{equation}
\lambda_{1}(W)=\lambda_{1}(A_{1})+\alpha^{2}(x^{(0)})^{T}B_{12}\left(
\lambda_{1}(A_{1})I-A_{2}\right)  ^{-1}B_{12}^{T}x^{(0)}+O(\alpha^{3})
\label{perturbation_non}%
\end{equation}
where $\left(  z^{(0)}\right)  ^{T}=\left(
\begin{array}
[c]{cc}%
x^{T} & \mathbf{0}^{T}%
\end{array}
\right)  $.
\end{theorem}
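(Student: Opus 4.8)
The plan is to run the Rayleigh--Schr\"odinger perturbation hierarchy already assembled above, namely $Az^{(k)}+Bz^{(k-1)}=\sum_{i=0}^{k}\lambda^{(k-i)}z^{(i)}$ together with the normalization relations (\ref{AAA}), and to exploit two structural facts throughout: $A=\mathrm{diag}(A_1,A_2)$ is block diagonal, while $B$ is purely off-diagonal. At zeroth order the equation is $Az^{(0)}=\lambda^{(0)}z^{(0)}$, so $z^{(0)}$ is an eigenvector of $A$. Because we track the \emph{largest} eigenvalue and $A$ is block diagonal, $\lambda^{(0)}=\lambda_1(A)=\max(\lambda_1(A_1),\lambda_1(A_2))=\lambda_1(A_1)$ in the non-degenerate case, with unit eigenvector $z^{(0)}=(x^{T},\mathbf{0}^{T})^{T}$, where $x=x^{(0)}$ is the principal eigenvector of $A_1$. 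This already pins down the form of $z^{(0)}$ asserted in the statement.

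Next I would extract $\lambda^{(1)}$ from the first-order equation $(A-\lambda^{(0)}I)z^{(1)}=\lambda^{(1)}z^{(0)}-Bz^{(0)}$. Projecting onto $z^{(0)}$ and using the symmetry of $A$, so that $(z^{(0)})^{T}(A-\lambda^{(0)}I)=\mathbf{0}^{T}$, yields the standard formula $\lambda^{(1)}=(z^{(0)})^{T}Bz^{(0)}$. The decisive simplification is that $Bz^{(0)}=(\mathbf{0}^{T},(B_{12}^{T}x)^{T})^{T}$ lives entirely in the second block whereas $z^{(0)}$ lives entirely in the first, so their inner product vanishes and $\lambda^{(1)}=0$. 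This is exactly why the claimed expansion carries no $\alpha^{1}$ term and jumps straight to order $\alpha^{2}$.

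I would then solve for $z^{(1)}=(u^{T},v^{T})^{T}$ itself. Since $\lambda^{(1)}=0$, the two blocks of $(A-\lambda_1(A_1)I)z^{(1)}=-Bz^{(0)}$ read $(A_1-\lambda_1(A_1)I)u=\mathbf{0}$ and $(A_2-\lambda_1(A_1)I)v=-B_{12}^{T}x$. The gauge condition $(z^{(0)},z^{(1)})=0$ coming from the $k=1$ case of (\ref{AAA}) forces $u$ to be orthogonal to $x$; combined with $u$ lying in the null space of $A_1-\lambda_1(A_1)I$ (spanned by $x$), this gives $u=\mathbf{0}$. This is the one genuine obstacle and the place where the non-degeneracy hypothesis is indispensable: solving for $v$ requires $(\lambda_1(A_1)I-A_2)$ to be invertible, which holds precisely because $\lambda_1(A_1)>\lambda_1(A_2)$ ensures $\lambda_1(A_1)$ is not an eigenvalue of $A_2$. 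Thus the Fredholm solvability condition is met with no secular/resonant blow-up, and $v=(\lambda_1(A_1)I-A_2)^{-1}B_{12}^{T}x$ is uniquely determined, so $z^{(1)}=(\mathbf{0}^{T},((\lambda_1(A_1)I-A_2)^{-1}B_{12}^{T}x)^{T})^{T}$.

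Finally I would read off $\lambda^{(2)}$ by the same projection applied to the second-order equation $(A-\lambda^{(0)}I)z^{(2)}=\lambda^{(2)}z^{(0)}-Bz^{(1)}$ (again using $\lambda^{(1)}=0$), giving $\lambda^{(2)}=(z^{(0)})^{T}Bz^{(1)}$. Substituting $z^{(1)}$ and invoking the off-diagonal form of $B$ once more collapses $Bz^{(1)}$ onto the first block, leaving $\lambda^{(2)}=x^{T}B_{12}(\lambda_1(A_1)I-A_2)^{-1}B_{12}^{T}x$. Assembling $\lambda_1(W)=\lambda^{(0)}+\lambda^{(1)}\alpha+\lambda^{(2)}\alpha^{2}+O(\alpha^{3})$ with $\lambda^{(1)}=0$ and $x=x^{(0)}$ then reproduces (\ref{perturbation_non}) exactly. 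Everything past the invertibility step is bookkeeping dictated by the block-diagonal structure of $A$ and the off-diagonal structure of $B$.
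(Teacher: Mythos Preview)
Your proposal is correct and follows essentially the same Rayleigh--Schr\"odinger route as the paper: identify $\lambda^{(0)}$ and $z^{(0)}$ at order zero, project at order one to get $\lambda^{(1)}=(z^{(0)})^{T}Bz^{(0)}=0$, solve the first-order equation for $z^{(1)}$, and project at order two to obtain $\lambda^{(2)}=(z^{(0)})^{T}Bz^{(1)}$. The only cosmetic difference is that you decompose into the two $N\times N$ blocks immediately when solving for $z^{(1)}$ and use the gauge condition $(z^{(0)},z^{(1)})=0$ to kill the top block directly, whereas the paper first writes $z^{(1)}=(\lambda^{(0)}I-A)^{-1}Bz^{(0)}$ as a pseudo-inverse on the orthogonal complement of $z^{(0)}$ and only later unpacks the block structure; your version is arguably cleaner about where exactly the non-degeneracy hypothesis enters (invertibility of $\lambda_1(A_1)I-A_2$), but the two arguments are the same in substance.
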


Note that in (\ref{secondorder_A}) $B$ is symmetric and $\left(  \lambda
^{(0)}I-A\right)  $ is positive definite and so is $B\left(  \lambda
^{(0)}I-A\right)  ^{-1}B$. Hence, this second order correction $\lambda^{(2)}
$ is always positive.

\begin{theorem}
\label{theorem_perturbation_degenerate} For small $\alpha$, when the two
component networks have the same largest eigenvalue $\lambda_{1}
(A_{1})=\lambda_{1}(A_{2})$,
\begin{align}
&  \lambda_{1}(W)=\lambda_{1}(A_{1})+\frac{1}{2}\alpha x^{T}B_{12}y+\alpha
^{2}(y^{(0)})^{T}B_{12}^{T}(\lambda^{(0)}I-A_{1}+x^{(0)}(x^{(0)})^{T}
)^{-1}\cdot\label{degenerate}\\
&  (B_{12}y^{(0)}-\lambda^{(1)}x^{(0)}+(x^{(0)})^{T}B_{12}(\lambda^{(0)}
I-A_{2}+x^{(0)}(x^{(0)})^{T})^{-1}B_{12}^{T}x^{(0)}-\lambda^{(1)}
y^{(0)})+O(\alpha^{3})\nonumber
\end{align}

\end{theorem}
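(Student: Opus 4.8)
The plan is to treat this as a \emph{degenerate} perturbation problem, in contrast with Theorem~\ref{theorem_perturbation_nondegenerate}. At zeroth order the hierarchy gives $Az^{(0)}=\lambda^{(0)}z^{(0)}$, and since $\lambda^{(0)}=\lambda_1(A_1)=\lambda_1(A_2)$ is now a \emph{double} eigenvalue of the block-diagonal $A$ (assuming $\lambda_1(A_1)$ and $\lambda_1(A_2)$ are each simple within their own network), the unperturbed eigenspace is two-dimensional, spanned by the zero-padded eigenvectors $u_1=[x^T,\mathbf 0^T]^T$ and $u_2=[\mathbf 0^T,y^T]^T$. Unlike the non-degenerate case, $z^{(0)}$ is therefore \emph{not} fixed at zeroth order; I would write $z^{(0)}=au_1+bu_2$ and determine $(a,b)$ from the first-order solvability condition.

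At first order the hierarchy reads $(A-\lambda^{(0)}I)z^{(1)}=\lambda^{(1)}z^{(0)}-Bz^{(0)}$. Because $A-\lambda^{(0)}I$ is symmetric and singular on the degenerate subspace, the Fredholm alternative requires the right-hand side to be orthogonal to that subspace. Projecting onto $u_1$ and $u_2$ converts this into the $2\times2$ secular eigenproblem $M\binom{a}{b}=\lambda^{(1)}\binom{a}{b}$ with $M=\left[\begin{smallmatrix}0 & x^TB_{12}y\\ x^TB_{12}y & 0\end{smallmatrix}\right]$, since $u_i^TBu_i=0$ and $u_1^TBu_2=x^TB_{12}y$. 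Its largest root fixes the first-order correction $\lambda^{(1)}$ proportional to $x^TB_{12}y$ and selects the ``good'' combination $z^{(0)}\propto[x^T,y^T]^T$ that diagonalizes $B$ inside the degenerate subspace. This step, which resolves the degeneracy, has no analogue in the non-degenerate proof.

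With $\lambda^{(1)}$ and $z^{(0)}$ fixed, I would solve the first-order equation for $z^{(1)}$ block by block. The operator $A-\lambda^{(0)}I$ is singular (its kernel is precisely the degenerate subspace), so it cannot be inverted outright; instead I invert it on the orthogonal complement using a rank-one-shifted resolvent. Concretely, $(\lambda^{(0)}I-A_1+x^{(0)}(x^{(0)})^T)^{-1}$ agrees with the restricted inverse of $\lambda^{(0)}I-A_1$ on vectors orthogonal to the kernel direction while being nonsingular overall, which is exactly the origin of the inverse factors appearing in~(\ref{degenerate}); an analogous shift handles the $A_2$ block. The residual freedom of $z^{(1)}$ along $z^{(0)}$ is pinned down by the normalization conditions~(\ref{AAA}). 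Substituting $z^{(1)}$ into the second-order equation $(A-\lambda^{(0)}I)z^{(2)}=\lambda^{(2)}z^{(0)}+\lambda^{(1)}z^{(1)}-Bz^{(1)}$ and again projecting onto the degenerate subspace (the solvability condition at this order) isolates $\lambda^{(2)}$, which after collecting the block contributions yields~(\ref{degenerate}).

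The main obstacle is the degeneracy itself, which couples two issues absent from Theorem~\ref{theorem_perturbation_nondegenerate}: first, the zeroth-order eigenvector is undetermined until the first-order solvability condition is imposed, so the correct linear combination must be found before any higher correction is meaningful; second, $A-\lambda^{(0)}I$ is genuinely singular, so every subsequent order requires inverting it only on the complement of the degenerate subspace, which is what the rank-one regularizations accomplish. A further bookkeeping subtlety is that the first-order splitting already lifts the two branches, so $\lambda^{(2)}$ must be computed for the specific surviving (largest) branch and must aggregate contributions from both the $A_1$ and $A_2$ blocks, which is why~(\ref{degenerate}) is structurally heavier than its non-degenerate counterpart~(\ref{perturbation_non}).
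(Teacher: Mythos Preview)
Your proposal is correct and follows essentially the same degenerate perturbation strategy as the paper: determine the correct combination $z^{(0)}$ within the two-dimensional eigenspace at first order, then invert $\lambda^{(0)}I-A_j$ on the orthogonal complement via the rank-one-shifted resolvents $(\lambda^{(0)}I-A_j+x^{(0)}(x^{(0)})^T)^{-1}$ to compute $z^{(1)}$ and hence $\lambda^{(2)}=(z^{(0)},Bz^{(1)})$. The only cosmetic difference is that you cast the first-order solvability condition as a $2\times2$ secular eigenproblem, whereas the paper phrases it as maximizing $(z^{(0)},Bz^{(0)})$ over $c_1^2+c_2^2=1$; both give $c_1=c_2=1/\sqrt2$ and then proceed identically, with the paper additionally verifying explicitly that $\lambda^{(2)}$ is invariant under the residual gauge freedom in the degenerate subspace before fixing the gauge $x^{(1)}\perp x^{(0)}$, $y^{(1)}\perp y^{(0)}$.
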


In the degenerate case, the first order correction is positive and the slope
depends on $B_{12}$, $y$, and $x$. When $A_{1}$ and $A_{2}$ are identical, the
largest eigenvalue of the interdependent networks becomes
\[
\lambda=\lambda_{1}(A_{1})+\alpha\left(  B_{12}x,x\right)  +O(\alpha^{2})
\]
When $A=\left[
\begin{array}
[c]{cc}%
A_{1} & \mathbf{0}\\
\mathbf{0} & A_{1}%
\end{array}
\right]  $ and $B=\left[
\begin{array}
[c]{cc}%
\mathbf{0} & I\\
I & \mathbf{0}%
\end{array}
\right]  $, our result (\ref{degenerate}) in the degenerate case up to the
first order leads to $\lambda_{1}(A+\alpha B)=\lambda_{1}(A)+\alpha$, which is
an alternate proof of Lemma \ref{aaii}. When $A=\left[
\begin{array}
[c]{cc}%
A_{1} & \mathbf{0}\\
\mathbf{0} & A_{1}%
\end{array}
\right]  $ and $B=\left[
\begin{array}
[c]{cc}%
\mathbf{0} & A_{1}\\
A_{1} & \mathbf{0}%
\end{array}
\right]  $, (\ref{degenerate}) again explains Lemma \ref{aaaa} that
$\lambda_{1}(A+\alpha B)=\left(  1+\alpha\right)  \lambda_{1}(A_{1})$.

\begin{lemma}
\label{alphainfinity} For large $\alpha$, the spectral radius of
interconnected networks is
\begin{equation}
\lambda_{1}(A+\alpha B)=\alpha\lambda_{1}(B)+v^{T}Av+O\left(  \alpha
^{-1}\right)  \label{perturb_alarge}%
\end{equation}
where $v$ is the eigenvector belonging to $\lambda_{1}(B)$ and
\[
\lambda_{1}(A+\alpha B)\leq\lambda_{1}(A)+\alpha\lambda_{1}(B)+O\left(
\alpha^{-1}\right)
\]

\end{lemma}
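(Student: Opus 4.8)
The plan is to turn the large-$\alpha$ regime into an ordinary small-parameter perturbation, but of $B$ rather than of $A$. Setting $\beta=1/\alpha$ and factoring out $\alpha$, we have $A+\alpha B=\alpha\left(B+\beta A\right)$, and hence $\lambda_1(A+\alpha B)=\alpha\,\lambda_1(B+\beta A)$. It therefore suffices to expand $\lambda_1(B+\beta A)$ for small $\beta$, which is precisely the eigenvalue-perturbation set-up already built in this section, with the roles of $A$ and $B$ swapped and $\beta$ playing the role of $\alpha$.

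Carrying this out, I would write $\mu=\sum_{k\ge 0}\mu^{(k)}\beta^k$ and $v=\sum_{k\ge 0}v^{(k)}\beta^k$ for the perturbed eigenpair, substitute into $(B+\beta A)v=\mu v$, and collect powers of $\beta$ to obtain the hierarchy $Bv^{(k)}+Av^{(k-1)}=\sum_{i=0}^{k}\mu^{(k-i)}v^{(i)}$. At order zero this forces $v^{(0)}=v$, the principal (unit) eigenvector of $B$, with $\mu^{(0)}=\lambda_1(B)$. At order one, $Bv^{(1)}+Av^{(0)}=\mu^{(0)}v^{(1)}+\mu^{(1)}v^{(0)}$; left-multiplying by $(v^{(0)})^T$ and using the symmetry of $B$ to cancel $(v^{(0)})^T Bv^{(1)}$ against $\mu^{(0)}(v^{(0)})^T v^{(1)}$ leaves $\mu^{(1)}=v^T A v$. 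Multiplying back by $\alpha$ gives $\lambda_1(A+\alpha B)=\alpha\lambda_1(B)+v^T A v+O(\alpha^{-1})$, which is (\ref{perturb_alarge}); Lemma \ref{bipartite} then rewrites $\lambda_1(B)$ in terms of $B_{12}$ if desired. The accompanying inequality follows immediately from the Rayleigh bound $v^T A v\le\lambda_1(A)$ (since $v$ is a unit vector), or equally directly from Weyl's inequality $\lambda_1(A+\alpha B)\le\lambda_1(A)+\alpha\lambda_1(B)$.

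The step I expect to be the main obstacle is justifying that this is genuinely a \emph{non-degenerate} first-order calculation, i.e.\ that $\lambda_1(B)$ is a simple eigenvalue so that $v^{(0)}$ is uniquely determined and the clean formula $\mu^{(1)}=v^T A v$ is valid. Because $B$ is bipartite, its spectrum is symmetric about $0$, and $\lambda_1(B)=\sqrt{\lambda_1(B_{12}^T B_{12})}$ is simple exactly when the interconnection bipartite graph is connected (Perron--Frobenius) and the largest singular value of $B_{12}$ is simple; this is the implicit hypothesis behind the stated formula. If $\lambda_1(B)$ were degenerate, the first-order correction would instead be the largest eigenvalue of the compression $V^T A V$ of $A$ onto the $\lambda_1(B)$-eigenspace (columns of $V$), and the single-vector expression $v^T A v$ would no longer apply; I would also need the analyticity of the branch $\mu(\beta)$ in $\beta$ to guarantee that the Taylor expansion, and thus the $O(\alpha^{-1})$ remainder, is legitimate.
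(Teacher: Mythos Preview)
Your argument is correct and is exactly the approach the paper takes: it writes $A+\alpha B=\alpha\bigl(B+\tfrac{1}{\alpha}A\bigr)$, applies first-order perturbation theory to the bracketed matrix to obtain $\lambda_1(B)+\tfrac{1}{\alpha}v^TAv+O(\alpha^{-2})$, and then invokes the Rayleigh inequality $v^TAv\le\lambda_1(A)$ for the upper bound. Your discussion of the simplicity of $\lambda_1(B)$ is a welcome clarification that the paper leaves implicit.
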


\begin{proof}
The lemma \ref{alphainfinity} follows by applying perturbation theory
\cite{Wilkinson} to the matrix $\alpha\left(  B+\frac{1}{\alpha}A\right)  $
and the Rayleigh principle \cite{PVM_Spectrabook}, which states that
$v^{T}Av\leq\lambda_{1}(A)$, for any normalized vector $v$ such that
$v^{T}v=1$, with equality only if $v$ is the eigenvector belonging to the
eigenvalue $\lambda_{1}(A)$.
\end{proof}

\section{Numerical simulations}

\label{simulation}

In this section, we employ numerical calculations to quantify to what extent
the perturbation approximation (\ref{perturbation_non}) and (\ref{degenerate})
for small $\alpha$, the perturbation approximation (\ref{perturb_alarge}) for
large $\alpha$, the upper (\ref{upperbound}) and lower bound
(\ref{lowerbound2}) are close to the exact value $\lambda_{1}(W)=\lambda
_{1}(A+\alpha B)$. We investigate the condition under which the approximations
provide better estimates. The analytical results derived earlier are valid for
arbitrary interconnected network structures. For simulations, we consider two
classic network models as possible topologies of $G_{1}$ and $G_{2}$: (i) the
Erd\H{o}s-R\'{e}nyi (ER) random network \cite{ER1,ER2,ER3} and (ii) the
Barab\'{a}si-Albert (BA) scale-free network \cite{BA}. ER networks are
characterized by a Binomial degree distribution $\Pr[D=k]=\binom{N-1}{k}%
p^{k}(1-p)^{N-1-k}$, where $N$ is the size of the network and $p$ is the
probability that each node pair is randomly connected. In scale-free networks,
the degree distribution is given by a power law $\Pr[D=k]=ck^{-\lambda}$ such
that $\sum_{k=1}^{N-1}ck^{-\lambda}=1$ and $\lambda=3$ in BA scale-free networks.

In numerical simulations, we consider $N_{1}=N_{2}=1000$. Specifically, in the
BA scale-free networks $m=3$ and the corresponding link density is
$p_{BA}\simeq0.006$. We consider ER networks with the same link density
$p_{ER}=p_{BA}=0.006$. A coupled network $G$ is the union of $G_{1}$ and
$G_{2}$, which are chosen from the above mentioned models, together with
random interconnection links with density $p_{I}$, the probability that any
two nodes from $G_{1}$ and $G_{2}$ respectively are interconnected. Given the
network models of $G_{1}$ and $G_{2}$ and the interacting link density $p_{I}%
$, we generate 100 interconnected network realizations. For each realization,
we compute the spectral radius $\lambda_{1}(W)$, its perturbation
approximation (\ref{perturbation_non}) and (\ref{degenerate}) for small
$\alpha$, the perturbation approximation (\ref{perturb_alarge}) for large
$\alpha$, upper bound (\ref{upperbound}) and lower bound (\ref{lowerbound2})
for any $\alpha$. We compare their averages over the 100 coupled network
realizations. We investigate the degenerate case $\lambda_{1}(G_{1}%
)=\lambda_{1}(G_{2})$ where the largest eigenvalue of $G_{1}$ and $G_{2}$ are
the same and the non-degenerate case where $\lambda_{1}(G_{1})\neq\lambda
_{1}(G_{2})$ respectively.

\subsection{Non-degenerate case}%

\begin{figure}[h]%
\centering
\includegraphics[
height=2.7397in,
width=6.922in
]%
{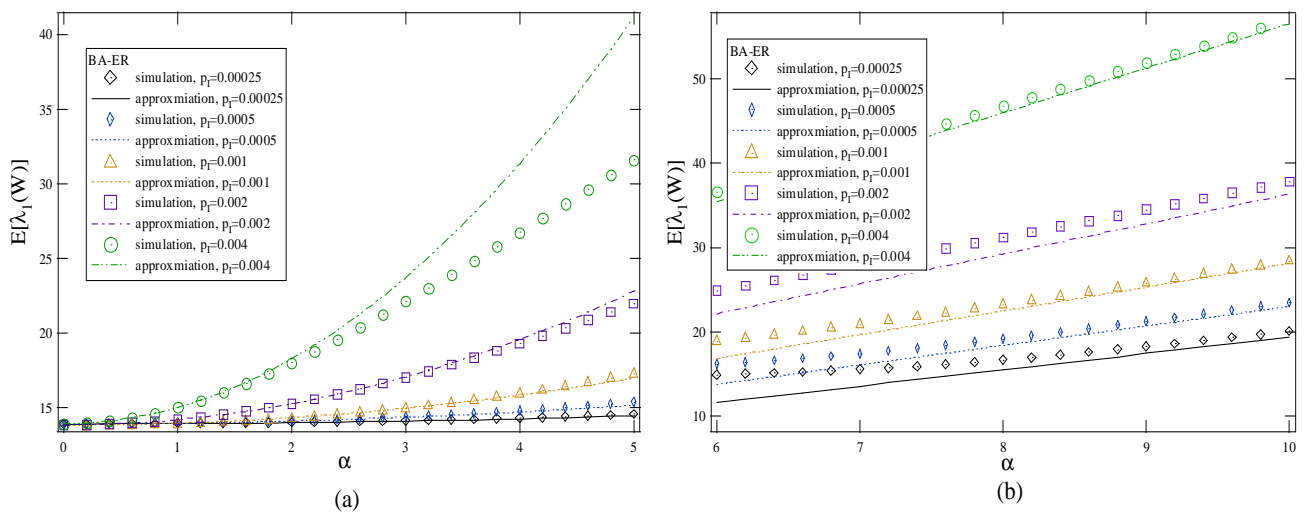}%
\caption{A plot of $\lambda_{1}(W)$ as a function of $\alpha$ for both
simulation results (symbol) and its (a) perturbation approximation
(\ref{perturbation_non}) for small $\alpha$ (dashed line) and (b) perturbation
approximation (\ref{perturb_alarge}) for large $\alpha$ (dashed line). The
interconneted network is composed of an ER random network and a BA scale-free
network both with $N=1000$ and link density $p=0.006$, randomly interconnected
with density $p_{I}$. All the results are averages of 100 realizations.}%
\label{baerapprox}%
\end{figure}
We consider the non-degenerate case in which $G_{1}$ is a BA scale-free
network with $N=1000,m=3$, $G_{2}$ is an ER random network with the same size
and link density $p_{ER}=p_{BA}\simeq0.006$, and the two networks are randomly
interconnected with link density $p_{I}$. We compute the largest average
eigenvalue $E[\lambda_{1}(W)]$ and the average of the perturbation
approximations and bounds mentioned above over 100 interconnected network
realizations for each interconnection link density $p_{I}\in\lbrack
0.00025,0.004]$ such that the average number of interdependent links ranges
from $\frac{N}{4},\frac{N}{2},N,2N$ to $4N$ and for each value $\alpha$ that
ranges from $0$ to $10$ with step size $0.05$.

For a single BA scale-free network, where the power exponent $\beta=3>2.5$,
the largest eigenvalue is $\left(  1+o(1)\right)  \sqrt{d_{\max}}$ where
$d_{\max}$ is the maximum degree in the network \cite{powerlaw}. The spectral
radius of a single ER random graph is close to the average degree
$(N-1)p_{ER}$ when the network is not sparse. When $p_{I}=0$, $\lambda
_{1}(G)=\max\left(  \lambda_{1}(G_{ER}),\lambda_{1}(G_{BA})\right)
=\lambda_{1}(G_{BA})>\lambda_{1}(G_{ER})$. The perturbation approximation is
expected to be close to the exact $\lambda_{1}(W)$ only for $\alpha
\rightarrow0$ and $\alpha\rightarrow\infty$. However, as shown in
Fig.~\ref{baerapprox}(a), the perturbation approximation for small $\alpha$
approximates $\lambda_{1}(W)$ well for a relative large range of $\alpha$,
especially for sparser interconnections, i.e., for a smaller interconnection
density $p_{I}$. Figure~\ref{baerapprox}(b) shows that the exact spectral
radius $\lambda_{1}(W)$ is already close to the large $\alpha$ perturbation
approximation, at least for $\alpha>8$.%

\begin{figure}[h]%
\centering
\includegraphics[
height=2.4716in,
width=6.6513in
]%
{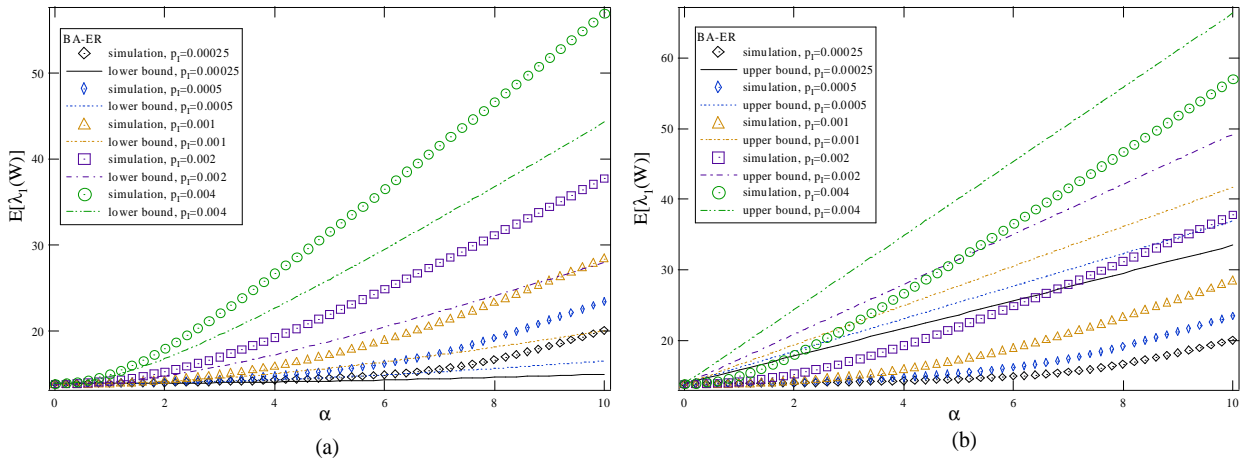}%
\caption{Plot $\lambda_{1}(W)$ as a function of $\alpha$ for both simulation
results (symbol) and its (a) its lower bound (\ref{lowerbound2}) (dashed line)
and (b) upper bound (\ref{upperbound})(dashed line). The interconneted network
is composed of an ER random network and a BA scale-free network both with
$N=1000$ and link density $p=0.006$, randomly interconnected with density
$p_{I}$. All the results are averages of 100 realizations.}%
\label{baerbounds}%
\end{figure}

As depicted in Fig.~\ref{baerbounds}, the lower bound (\ref{lowerbound2}) and
upper bound (\ref{upperbound}) are sharp, i.e., close to $\lambda_{1}(W)$ for
small $\alpha$. The lower and upper bounds are the same as $\lambda_{1}(W)$
when $\alpha\rightarrow0$. For large $\alpha$, the lower bound better
approximates $\lambda_{1}(W)$ when the interconnections are sparser. Another
lower bound $\alpha\lambda_{1}(B)\leq\lambda_{1}(W)$, i.e., Lemma
\ref{lowerboundasymptotic}, is sharp for large $\alpha$, as shown in
Fig.~\ref{lowerasymp}, especially for sparse interconnections. We do not
illustrate the lower bound (\ref{lowerbound1}) because the lower bound
(\ref{lowerbound2}) is always sharper or equally good. The lower bound
$\alpha\lambda_{1}(B)$ considers only the largest eigenvalue of the
interconnection network $B$ and ignores the two individual networks $G_{1}$
and $G_{2}$. The difference $\lambda_{1}(W)-$ $\alpha\lambda_{1}%
(B)=v^{T}Av+O\left(  \alpha^{-1}\right)  $ according to the large $\alpha$
perturbation approximation, is shown in Fig.~\ref{lowerasymp} to be larger for
denser interconnections. It suggests that $G_{1}$ and $G_{2}$ contribute more
to the spectral radius of the interconnected networks when the
interconnections are denser in this non-degenerate case. For large $\alpha$,
the upper bound is sharper when the interconnections are denser or when
$p_{I}$ is larger, as depicted in Figure \ref{baerbounds}(b). This is because
$\alpha\lambda_{1}(B)\leq\lambda_{1}(W)\leq\alpha\lambda_{1}(B)+\max
{(\lambda_{1}(A_{1}),\lambda_{1}(A_{2}))}$. When the interconnections are
sparse, $\lambda_{1}(W)$ is close to the lower bound $\alpha\lambda_{1}(B)$
and hence far from the upper bound.%
\begin{figure}[h]%
\centering
\includegraphics[
height=2.9274in,
width=4.4685in
]%
{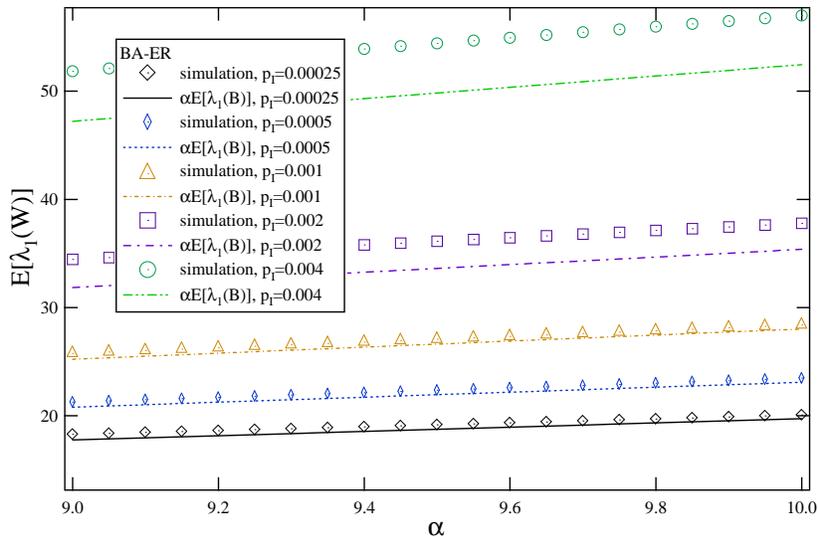}%
\caption{Plot $\lambda_{1}(W)$ as a function of $\alpha$ for both simulation
results (symbol) and its lower bound $\alpha\lambda_{1}(B)$ (dashed line). The
interconneted network is composed of an ER random network and a BA scale-free
network both with $N=1000$ and link density $p=0.006$, randomly interconnected
with density $p_{I}$. All the results are averages of 100 realizations.}%
\label{lowerasymp}%
\end{figure}

Most interdependent or coupled networks studied so far assume that both
individual networks have the same number of nodes $N$ and that the two
networks are interconnected randomly by $N$ one-to-one interconnections, or by
a fraction $q$ of the $N\ $one-to-one interconnections where $0<q\leq1$
\cite{Buldyrev2010,Huang_inter,Li}. These coupled networks correspond to our
sparse interconnection cases where $p_{I}\leq1$, when $\lambda_{1}(B)$ is well
approximated by the perturbation approximation for both small and large
$\alpha$. The spectral radius $\lambda_{1}(W)$ increases quadratically with
$\alpha$ for small $\alpha$, as described by the small $\alpha$ perturbation
approximation. The increase accelerates as $\alpha$ increases and converges to
a linear increase with $\alpha$, with slope $\lambda_{1}(B)$. Here we show the
cases in which $G_{1}$, $G_{2}$, and the interconnections are sparse, as in
most real-world networks. However, all the analytical results can be applied
to arbitrary interconnected network structures.

\subsection{Degenerate case}%

\begin{figure}[h]%
\centering
\includegraphics[
height=2.4664in,
width=6.7758in
]%
{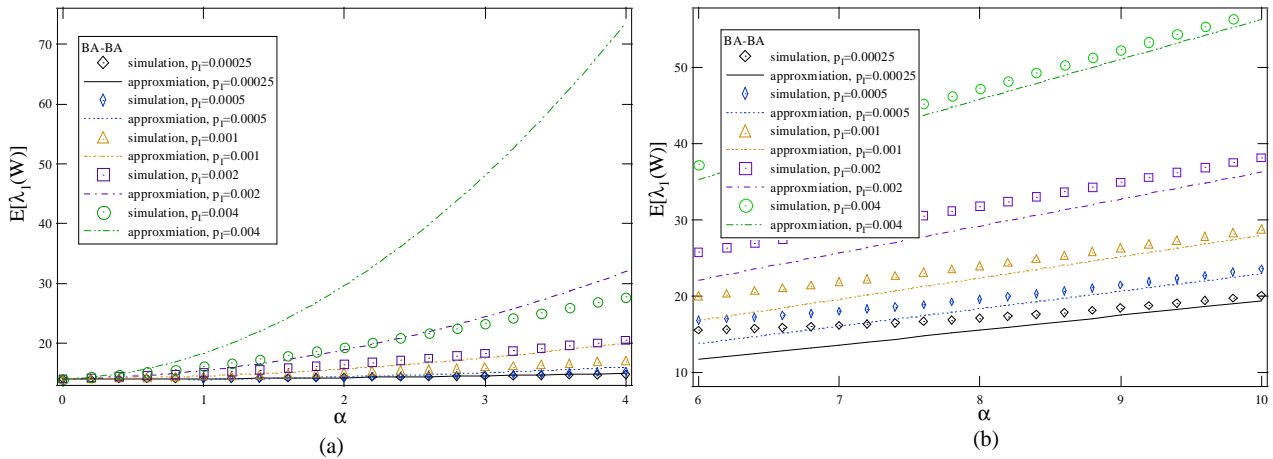}%
\caption{A plot of $\lambda_{1}(W)$ as a function of $\alpha$ for both
simulation results (symbol) and its (a) perturbation approximation
(\ref{degenerate}) for small $\alpha$ (dashed line) and (b) perturbation
approximation (\ref{perturb_alarge}) for large $\alpha$ (dashed line). The
interconneted network is composed of two identical BA scale-free networks with
$N=1000$ and link density $p=0.006$, randomly interconnected with density
$p_{I}$. All the results are averages of 100 realizations.}%
\label{babaasym}%
\end{figure}
We assume the spectrum \cite{Haemers} to be an unique fingerprint of a large
network. Two large networks of the same size seldom have the same largest
eigenvalue. Hence, most interconnected networks belong to the non-degenerate
case. Degenerate cases mostly occur when $G_{1}$ and $G_{2}$ are identical, or
when they are both regular networks with the same degree. We consider two
degenerate cases where both network $G_{1}$ and $G_{2}$ are ER random networks
or BA scale-free networks. Both ER and BA networks lead to the same
observations. Hence as an example we show the case in which both $G_{1}$ and
$G_{2}$ are BA scale-free networks of size $N=1000$ and both are randomly
interconnected with density $p_{I}\in\lbrack0.00025,0.004]$, as in the
non-degenerate case. Figure~\ref{babaasym}(a) shows that the perturbation
analysis well approximates $\lambda_{1}(W)$ for small $\alpha$, especially
when the interconnection density is small. Moreover, the small $\alpha$
perturbation approximation performs better in the non-degenerate case, i.e.,
is closer to $\lambda_{1}(W)$ than in degenerate cases [see
Fig.~\ref{baerapprox}(a)].%
\begin{figure}[h]%
\centering
\includegraphics[
height=2.5296in,
width=6.9427in
]%
{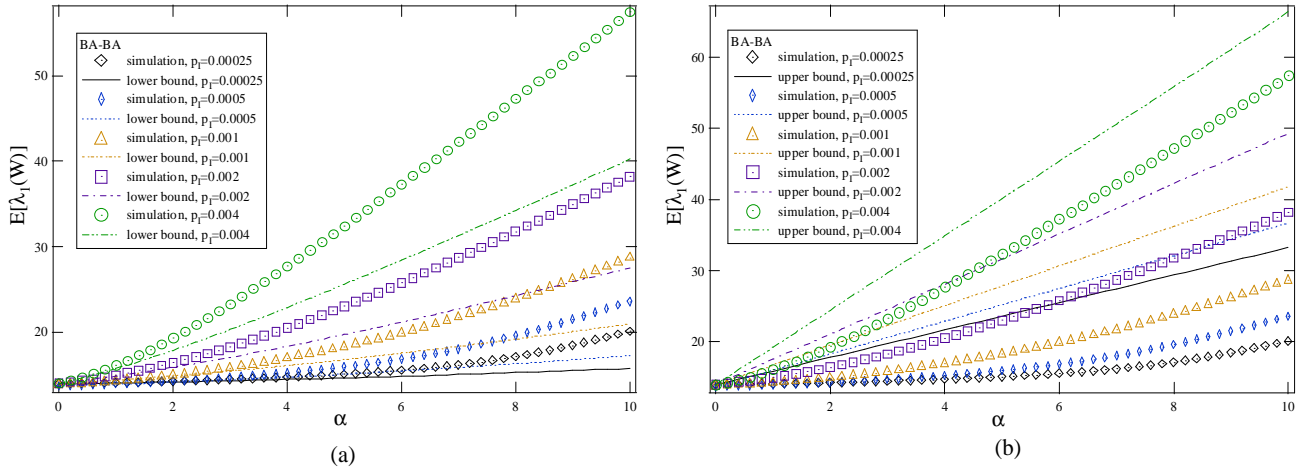}%
\caption{Plot $\lambda_{1}(W)$ as a function of $\alpha$ for both simulation
results (symbol) and its (a) its lower bound (\ref{lowerbound2}) (dashed line)
and (b) upper bound (\ref{upperbound})(dashed line). The interconneted network
is composed of two identical BA scale-free networks $N=1000$ and link density
$p=0.006$, randomly interconnected with density $p_{I}$. All the results are
averages of 100 realizations.}%
\label{bababounds}%
\end{figure}
Similarly, Fig.~\ref{bababounds} shows that both the lower and upper bound are
sharper for small $\alpha$. The lower bound better approximates $\lambda
_{1}(W)$ for sparser interconnections whereas the upper bound better
approximates $\lambda_{1}(W)$ for denser interconnections.

Thus far we have examined the cases where $G_{1}$, $G_{2}$, and the
interconnections are sparse, as is the case in most real-world networks.
However, if both $G_{1}$ and $G_{2}$ are dense ER random networks and if the
random interconnections are also dense, the upper bound is equal to
$\lambda_{1}(W)$, i.e., $\lambda_{1}(W)=\lambda_{1}(G_{1})+\alpha\lambda
_{1}(B)$ (see Lemma \ref{Lemma_regular}). Equivalently, the difference
$\lambda_{1}(W)-\alpha\lambda_{1}(B)$ is a constant $\lambda_{1}%
(G_{1})=\lambda_{1}(G_{2})$ independent of the interconnection density $p_{I}$.

In both the non-degenerate and degenerate case, $\lambda_{1}(W)$ is well
approximated by a perturbation analysis for a large range of small $\alpha$,
especially when the interconnections are sparse, and also for a large range of
large $\alpha$. The lower bound (\ref{lowerbound2}) and upper bound
(\ref{upperbound}) are sharper for small $\alpha$. Most real-world
interconnected networks are sparse and non-degenerate, where our perturbation
approximations are precise for a large range of $\alpha$, and thus reveal well
the effect of component network structures on the epidemic characterizer
$\lambda_{1}(W)$.

\section{Conclusion}

\label{conclusion}

We study interconnected networks that are composed of two individual networks
$G_{1}$ and $G_{2}$, and interconnecting links represented by adjacency
matricies $A_{1}$, $A_{2}$, and $B$ respectively. We consider SIS epidemic
spreading in these generic coupled networks, where the infection rate within
$G_{1}$ and $G_{2}$ is $\beta$, the infection rate between the two networks is
$\alpha\beta$, and the recovery rate is $\delta$ for all agents. Using a NIMFA
we show that the epidemic threshold with respect to $\beta/\delta$ is
$\tau_{c}=\frac{1}{\lambda_{1}(A+\alpha B)}$, where $A=\left[
\begin{array}
[c]{cc}%
A_{1} & \mathbf{0}\\
\mathbf{0} & A_{2}%
\end{array}
\right]  $ is the adjacency matrix of the two isolated networks $G_{1}$ and
$G_{2}$. The largest eigenvalue $\lambda_{1}(A+\alpha B)$ can thus be used to
characterize epidemic spreading. This eigenvalue $\lambda_{1}(A+\alpha B)$ of
a function of matrices seldom gives the contribution of each component
network. We analytically express the perturbation approximation for small and
large $\alpha$, lower and upper bounds for any $\alpha$, of $\lambda
_{1}(A+\alpha B)$ as a function of component networks $A_{1}$, $A_{2}$, and
$B$ and their largest eigenvalues/eigenvectors. Using numerical simulations,
we verify that these approximations or bounds approximate well the exact
$\lambda_{1}(A+\alpha B)$, especially when the interconnections are sparse and
when the largest eigenvalues of the two networks $G_{1}$ and $G_{2}$ are
different (the non-degenerate case), as is the case in most real-world
interconnected networks. Hence, these approximations and bounds reveal how
component network properties affect the epidemic characterizer $\lambda
_{1}(A+\alpha B)$. Note that the term $x^{T}B_{12}y$ contributes positively to
the perturbation approximation (\ref{degenerate}) and the lower bound
(\ref{lowerbound2}) of $\lambda_{1}(A+\alpha B)$ where $x$ and $y$ are the
principal eigenvector of network $G_{1}$ and $G_{2}$. This suggests that,
given two isolated networks $G_{1}$ and $G_{2}$, the interconnected networks
have a larger $\lambda_{1}(A+\alpha B)$ or a smaller epidemic threshold if the
two nodes $i$ and $j$ with a larger eigenvector component product $x_{i}y_{j}$
from the two networks, respectively, are interconnected. This observation
provides essential insights useful when designing interconnected networks to
be robust against epidemics. The largest eigenvalue also characterizes the
phase transition of coupled oscillators and percolation. Our results apply to
arbitrary interconnected network structures and are expected to apply to a
wider range of dynamic processes.

\section{Acknowledgements}

We wish to thank ONR (Grant N00014-09-1-0380, Grant N00014-12-1-0548), DTRA
(Grant HDTRA-1-10-1-0014, Grant HDTRA-1-09-1-0035), NSF (Grant CMMI 1125290),
the European EPIWORK, MULTIPLEX, CONGAS (Grant FP7-ICT-2011-8-317672), MOTIA
(Grant JLS-2009-CIPS-AG-C1-016) and LINC projects, the Deutsche
Forschungsgemeinschaft (DFG), the Next Generation Infrastructure (Bsik) and
the Israel Science Foundation for financial support.

\appendix

\section{Proofs}

\subsection{Proof of Lemma \ref{Lemma_regular}}

In any regular graph, the minimal and maximal node strength are both equal to
the average node strength. Since the largest eigenvalue is lower bounded by
the average node strength and upper bounded by the maximal node strength as
proved below in Lemma \ref{boundlamda}, a regular graph has the minimal
possible spectral radius, which equals the average node strength. When the
interdependent links are randomly connected with link density $p_{I}$, the
coupled network is asymptotically a regular graph with average node strength
$E[D]+\alpha Np_{I}$, if $p_{I}$ is a constant.

\begin{lemma}
\label{boundlamda}For any $N\times N$ weighted symmetric matrix $W$,
\[
E[S]\leq\lambda_{1}(W)\leq\max s_{r}
\]
where $s_{r}=\sum_{j=1}^{N}w_{rj}$ is defined as the node strength of node $r
$ and $E[S]$ is the average node strength over all the nodes in graph $G$.
\end{lemma}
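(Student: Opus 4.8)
The plan is to prove the two inequalities separately, using the Rayleigh characterization of the largest eigenvalue for the lower bound and the non-negativity of the entries for the upper bound. For the lower bound $E[S]\leq\lambda_{1}(W)$, I would invoke the Rayleigh inequality already recalled in the paper, $\frac{z^{T}Wz}{z^{T}z}\leq\lambda_{1}(W)$ for every nonzero $z$, and test it on the all-ones vector $u=(1,\ldots,1)^{T}$. Since $u^{T}u=N$ and $u^{T}Wu=\sum_{i,j}w_{ij}=\sum_{r}s_{r}=N\,E[S]$, the Rayleigh quotient equals exactly the average node strength, which yields the bound immediately. This step is routine.

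For the upper bound $\lambda_{1}(W)\leq\max_{r}s_{r}$, I would use that $W=A+\alpha B$ has non-negative entries (as $\alpha\geq0$), so by Perron--Frobenius the eigenvector $x$ belonging to $\lambda_{1}(W)$ may be taken entrywise non-negative. Let $k$ be an index with $x_{k}=\max_{i}x_{i}>0$. Reading off the $k$-th component of $Wx=\lambda_{1}(W)x$ and bounding each $x_{j}\leq x_{k}$ gives $\lambda_{1}(W)x_{k}=\sum_{j}w_{kj}x_{j}\leq x_{k}\sum_{j}w_{kj}=x_{k}s_{k}\leq x_{k}\max_{r}s_{r}$; dividing by $x_{k}>0$ gives the claim.

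The main subtlety is the upper bound: it relies essentially on the non-negativity of the entries so that the principal eigenvector can be chosen non-negative, since the step $x_{j}\leq x_{k}$ must be paired with $w_{kj}\geq0$. Without non-negativity the bound can fail, so I would make explicit at the outset that $W=A+\alpha B$ with $\alpha\geq0$ is a non-negative symmetric matrix. An alternative that sidesteps the eigenvector argument is Gershgorin's disc theorem, which confines every eigenvalue to $\bigcup_{r}[\,w_{rr}-R_{r},\,w_{rr}+R_{r}\,]$ with $R_{r}=\sum_{j\neq r}|w_{rj}|=\sum_{j\neq r}w_{rj}$, so that $\lambda_{1}(W)\leq\max_{r}(w_{rr}+R_{r})=\max_{r}s_{r}$. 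I would use whichever of the two is cleaner in context, noting that both reduce to the same non-negativity requirement.
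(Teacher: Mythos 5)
Your proposal is correct and follows essentially the same route as the paper: the lower bound via the Rayleigh quotient at the all-ones vector is identical, and your upper bound (max component of a Perron eigenvector, or Gershgorin) is the same row-sum argument the paper gives, which it phrases by normalizing the max-modulus component to $1$ and using $|w_{rj}x_{j}/x_{r}|\leq w_{rj}$ so that Perron--Frobenius need not be invoked explicitly. You are also right to flag that the upper bound needs the entries of $W$ to be non-negative, a hypothesis the paper only states inside its proof ("when none of the elements of matrix $W$ are negative") rather than in the lemma.
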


\begin{proof}
The largest eigenvalue $\lambda_{1}$ follows
\[
\lambda_{1}=\sup_{x\neq0}\frac{x^{T}Wx}{x^{T}x}
\]
when matrix $W$ is symmetric and the maximum is attained if and only if $x$ is
the eigenvector of $W$ belonging to $\lambda_{1}(W)$. For any other vector
$y\neq x,$ it holds that $\lambda_{1}\geq\frac{y^{T}Wy}{y^{T}y}$. By choosing
the vector $y=u=(1,1,...,1),$ we have
\[
\lambda_{1}\geq\frac{1}{N}\sum_{i=1}^{N}\sum_{j=1}^{N}w_{ij}=\frac{1}{N}
\sum_{i=1}^{N}s_{i}=E[S]
\]
where $w_{ij}$ is the element in matrix $W$ and $E[S]$ is the average node
strength of the graph $G$. The upper bound is proved by Gerschgorin circle
theorem. Suppose component $r$ of eigenvector $x$ has the largest modulus. The
eigenvector can be always normalized such that
\[
x^{^{\prime}}=\left(  \frac{x_{1}}{x_{r}},\frac{x_{2}}{x_{r}},...,\frac
{x_{r-1}}{x_{r}},1,\frac{x_{r+1}}{x_{r}},...,\frac{x_{N}}{x_{r}}\right)
\]
where $\left\vert \frac{x_{j}}{x_{r}}\right\vert \leq1$ for all $j$. Equating
component $r$ on both sides of the eigenvalue equation $Wx^{^{\prime}
}=\lambda_{1}x^{^{\prime}}$ gives
\[
\lambda_{1}(W)=\sum_{j=1}^{N}w_{rj}\frac{x_{j}}{x_{r}}\leq\sum_{j=1}
^{N}\left\vert w_{rj}\frac{x_{j}}{x_{r}}\right\vert \leq\sum_{j=1}
^{N}\left\vert w_{rj}\right\vert =s_{r}
\]
when none of the elements of matrix $W$ are negative. Since any component of
$x$ may have the largest modulus, $\lambda_{1} (W)\leq\max s_{r}$.
\end{proof}

\subsection{Proof of Theorem \ref{theorem_lowerbound1}}

We consider the $2N\times1$ vector $z$ as $z^{T}=\left[
\begin{array}
[c]{cc}%
C_{1}x^{T} & C_{2}y^{T}%
\end{array}
\right]  $ the linear combination of the principal eigenvector $x$ and $y$ of
the two individual networks respectively, where $x^{T}x=1$, $y^{T}y=1$,
$C_{1}^{2}+C_{2}^{2}=1$ such that $z^{T}z=1$ and compute
\begin{align*}
z^{T}Wz  &  =\left[
\begin{array}
[c]{cc}%
C_{1}x^{T} & C_{2}y^{T}%
\end{array}
\right]  \left[
\begin{array}
[c]{cc}%
A_{1} & \alpha B_{12}\\
\alpha B_{12}^{T} & A_{2}%
\end{array}
\right]  \left[
\begin{array}
[c]{c}%
C_{1}x\\
C_{2}y
\end{array}
\right] \\
&  =C_{1}^{2}x^{T}A_{1}x+C_{2}^{2}y^{T}A_{2}y+2\alpha2C_{1}C_{2}x^{T}B_{12}y\\
&  =C_{1}^{2}\lambda_{1}\left(  A_{1}\right)  +C_{2}^{2}\lambda_{1}\left(
A_{2}\right)  +2C_{1}C_{2}\xi
\end{align*}
where $\xi=\alpha x^{T}B_{12}y$. By Rayleigh's principle $\lambda_{1}\left(
W\right)  \geq\frac{z^{T}Wz}{z^{T}z}=z^{T}Wz$. We could improve this lower
bound by selecting $z$ as the best linear combination ($C_{1}$ and $C_{2}$) of
$x$ and $y$. Let $\lambda_{L}$ be the best possible lower bound $\frac
{z^{T}Wz}{z^{T}z}$ via the optimal linear combination of $x$ and $y$. Thus,
\[
\lambda_{L}=\max_{C_{1}^{2}+C_{2}^{2}=1}C_{1}^{2}\lambda_{1}\left(
A_{1}\right)  +C_{2}^{2}\lambda_{1}\left(  A_{2}\right)  +2C_{1}C_{2}\xi
\]
We use the Lagrange multipliers method and define the Lagrange function as
\[
\Lambda=C_{1}^{2}\lambda_{1}\left(  A_{1}\right)  +C_{2}^{2}\lambda_{1}\left(
A_{2}\right)  +2C_{1}C_{2}\xi-\mu\left(  C_{1}^{2}+C_{2}^{2}-1\right)
\]
where $\mu$ is the Lagrange multiplier. The maximum is achieved at the
solutions of
\begin{align*}
\frac{\partial\Lambda}{\partial C_{1}}  &  =2C_{1}\lambda_{1}\left(
A_{1}\right)  +2C_{2}\xi-2C_{1}\mu=0\\
\frac{\partial\Lambda}{\partial C_{2}}  &  =2C_{2}\lambda_{1}\left(
A_{2}\right)  +2C_{1}\xi-2C_{2}\mu=0\\
\frac{\partial\Lambda}{\partial\mu}  &  =C_{1}^{2}+C_{2}^{2}-1=0
\end{align*}
Note that $\left(  C_{1}\frac{\partial\Lambda}{\partial C_{1}}+C_{2}
\frac{\partial\Lambda}{\partial C_{2}}\right)  /2=\lambda_{L}-\mu=0,$ which
leads to $\mu=\lambda_{L}.$ Hence, the maximum $\lambda_{L}$ is achieved at
the solution of
\begin{align*}
C_{1}\lambda_{1}\left(  A_{1}\right)  +C_{2}\xi-C_{1}\lambda_{L}  &  =0\\
C_{2}\lambda_{1}\left(  A_{2}\right)  +C_{1}\xi-C_{2}\lambda_{L}  &  =0
\end{align*}
that is
\[
\det\left(
\begin{array}
[c]{cc}%
\lambda_{1}\left(  A_{1}\right)  -\lambda_{L} & \xi\\
\xi & \lambda_{1}\left(  A_{2}\right)  -\lambda_{L}%
\end{array}
\right)  =0
\]
This leads to
\begin{align*}
\lambda_{L}  &  =\frac{\lambda_{1}\left(  A_{1}\right)  +\lambda_{1}\left(
A_{2}\right)  }{2}+\sqrt{\left(  \frac{\lambda_{1}\left(  A_{1}\right)
-\lambda_{1}\left(  A_{2}\right)  }{2}\right)  ^{2}+\xi^{2}}\\
&  =\frac{\lambda_{1}\left(  A_{1}\right)  +\lambda_{1}\left(  A_{2}\right)
}{2}+\left\vert \frac{\lambda_{1}\left(  A_{1}\right)  -\lambda_{1}\left(
A_{2}\right)  }{2}\right\vert +\left(  \sqrt{\left(  \frac{\lambda_{1}\left(
A_{1}\right)  -\lambda_{1}\left(  A_{2}\right)  }{2}\right)  ^{2}+\xi^{2}
}-\left\vert \frac{\lambda_{1}\left(  A_{1}\right)  -\lambda_{1}\left(
A_{2}\right)  }{2}\right\vert \right) \\
&  =\max\left(  \lambda_{1}\left(  A_{1}\right)  ,\lambda_{1}\left(
A_{2}\right)  \right)  +\left(  \sqrt{\left(  \frac{\lambda_{1}\left(
A_{1}\right)  -\lambda_{1}\left(  A_{2}\right)  }{2}\right)  ^{2}+\xi^{2}
}-\left\vert \frac{\lambda_{1}\left(  A_{1}\right)  -\lambda_{1}\left(
A_{2}\right)  }{2}\right\vert \right)
\end{align*}
The maximum is obtained when
\[
z^{T}=\pm\left[
\begin{array}
[c]{cc}%
\sqrt{\frac{\lambda_{1}\left(  A_{2}\right)  -\lambda_{L}}{\lambda_{1}\left(
A_{1}\right)  +\lambda_{1}\left(  A_{2}\right)  -2\lambda_{L}}}x^{T} &
\sqrt{\frac{\lambda_{1}\left(  A_{1}\right)  -\lambda_{L}}{\lambda_{1}\left(
A_{1}\right)  +\lambda_{1}\left(  A_{2}\right)  -2\lambda_{L}}}y^{T}%
\end{array}
\right]
\]

\subsection{Proof of Theorem \ref{theorem_lowerbound2}}

By Rayleigh's principle $\lambda_{1}^{2}\left(  W\right)  \geq\frac{z^{T}
W^{2}z}{z^{T}z}=z^{T}W^{2}z$. We consider $z$ as linear combination
$z^{T}=\left[
\begin{array}
[c]{cc}%
C_{1}x^{T} & C_{2}y^{T}%
\end{array}
\right]  $ of $x$ and $y$. The lower bound
\begin{align*}
z^{T}W^{2}z  &  =\left[
\begin{array}
[c]{cc}%
C_{1}x^{T} & C_{2}y^{T}%
\end{array}
\right]  \left[
\begin{array}
[c]{cc}%
A_{1}^{2}+\alpha^{2}B_{12}B_{12}^{T} & \alpha\left(  A_{1}B_{12}+B_{12}
A_{2}\right) \\
\alpha\left(  A_{1}B_{12}+B_{12}A_{2}\right)  ^{T} & A_{2}^{2}+\alpha
^{2}B_{12}^{T}B_{12}%
\end{array}
\right]  \left[
\begin{array}
[c]{c}%
C_{1}x\\
C_{2}y
\end{array}
\right] \\
&  =C_{1}^{2}x^{T}A_{1}^{2}x+C_{2}^{2}y^{T}A_{2}^{2}y+\alpha^{2}\left(
C_{1}^{2}x^{T}B_{12}B_{12}^{T}x+C_{2}^{2}y^{T}B_{12}^{T}B_{12}y\right)
+2\alpha C_{1}C_{2}x^{T}\left(  A_{1}B_{12}+B_{12}A_{2}\right)  y\\
&  =C_{1}^{2}\lambda_{1}^{2}\left(  A_{1}\right)  +C_{2}^{2}\lambda_{1}
^{2}\left(  A_{2}\right)  +2C_{1}C_{2}\theta+\alpha^{2}\left(  C_{1}
^{2}\left\Vert B_{12}^{T}x\right\Vert _{2}^{2}+C_{2}^{2}\left\Vert
B_{12}y\right\Vert _{2}^{2}\right)
\end{align*}
where $\theta=\alpha\left(  \lambda_{1}\left(  A_{1}\right)  +\lambda
_{1}\left(  A_{2}\right)  \right)  x^{T}B_{12}y$. Let $\lambda_{L}$ be the
best possible lower bound $z^{T}W^{2}z$ via the optimal linear combination
($C_{1}$ and $C_{2}$) of $x$ and $y$. Thus,
\[
\lambda_{L}=\max_{C_{1}^{2}+C_{2}^{2}=1}C_{1}^{2}\lambda_{1}^{2}\left(
A_{1}\right)  +C_{2}^{2}\lambda_{1}^{2}\left(  A_{2}\right)  +2C_{1}
C_{2}\theta+\alpha^{2}\left(  C_{1}^{2}\left\Vert B_{12}^{T}x\right\Vert
_{2}^{2}+C_{2}^{2}\left\Vert B_{12}y\right\Vert _{2}^{2}\right)
\]
We use the Lagrange multipliers method and define the Lagrange function as
\[
\Lambda=C_{1}^{2}\lambda_{1}^{2}\left(  A_{1}\right)  +C_{2}^{2}\lambda
_{1}^{2}\left(  A_{2}\right)  +2C_{1}C_{2}\theta+\alpha^{2}\left(  C_{1}
^{2}\left\Vert B_{12}^{T}x\right\Vert _{2}^{2}+C_{2}^{2}\left\Vert
B_{12}y\right\Vert _{2}^{2}\right)  -\mu\left(  C_{1}^{2}+C_{2}^{2}-1\right)
\]
where $\mu$ is the Lagrange multiplier. The maximum is achieved at the
solutions of
\begin{align*}
\frac{\partial\Lambda}{\partial C_{1}}  &  =2C_{1}\lambda_{1}^{2}\left(
A_{1}\right)  +2\alpha C_{2}\left(  \lambda_{1}\left(  A_{1}\right)
+\lambda_{1}\left(  A_{2}\right)  \right)  x^{T}B_{12}y+2\alpha^{2}
C_{1}\left\Vert B_{12}^{T}x\right\Vert _{2}^{2}-2C_{1}\mu=0\\
\frac{\partial\Lambda}{\partial C_{2}}  &  =2C_{2}\lambda_{1}^{2}\left(
A_{2}\right)  +2\alpha C_{1}\left(  \lambda_{1}\left(  A_{1}\right)
+\lambda_{1}\left(  A_{2}\right)  \right)  x^{T}B_{12}y+2\alpha^{2}
C_{2}\left\Vert B_{12}y\right\Vert _{2}^{2}-2C_{2}\mu=0\\
\frac{\partial\Lambda}{\partial\mu}  &  =C_{1}^{2}+C_{2}^{2}-1=0
\end{align*}
which lead to $\left(  C_{1}\frac{\partial\Lambda}{\partial C_{1}}+C_{2}
\frac{\partial\Lambda}{\partial C_{2}}\right)  /2=\lambda_{L}-\mu=0$. Hence,
the maximum $\lambda_{L}$ is achieved at the solution of
\begin{align*}
C_{1}\lambda_{1}^{2}\left(  A_{1}\right)  +C_{2}\theta+\alpha^{2}
C_{1}\left\Vert B_{12}^{T}x\right\Vert _{2}^{2}-C_{1}\lambda_{L}  &  =0\\
C_{2}\lambda_{1}^{2}\left(  A_{2}\right)  +C_{1}\theta+\alpha^{2}
C_{2}\left\Vert B_{12}y\right\Vert _{2}^{2}-C_{2}\lambda_{L}  &  =0
\end{align*}
that is
\[
\det\left(
\begin{array}
[c]{cc}%
\lambda_{1}^{2}\left(  A_{1}\right)  +\alpha^{2}\left\Vert B_{12}
^{T}x\right\Vert _{2}^{2}-\lambda_{L} & \theta\\
\theta & \lambda_{1}^{2}\left(  A_{2}\right)  +\alpha^{2}\left\Vert
B_{12}y\right\Vert _{2}^{2}-\lambda_{L}%
\end{array}
\right)  =0
\]
This leads to
\[
\lambda_{L}^{2}-\left(  \lambda_{1}^{2}\left(  A_{1}\right)  +\alpha
^{2}\left\Vert B_{12}^{T}x\right\Vert _{2}^{2}+\lambda_{1}^{2}\left(
A_{2}\right)  +\alpha^{2}\left\Vert B_{12}y\right\Vert _{2}^{2}\right)
\lambda_{L}+\left(  \lambda_{1}^{2}\left(  A_{1}\right)  +\alpha^{2}\left\Vert
B_{12}^{T}x\right\Vert _{2}^{2}\right)  \left(  \lambda_{1}^{2}\left(
A_{2}\right)  +\alpha^{2}\left\Vert B_{12}y\right\Vert _{2}^{2}\right)
-\theta^{2}=0
\]
Hence,
\begin{align*}
\lambda_{L}  &  =\frac{\left(  \lambda_{1}^{2}\left(  A_{1}\right)
+\alpha^{2}\left\Vert B_{12}^{T}x\right\Vert _{2}^{2}+\lambda_{1}^{2}\left(
A_{2}\right)  +\alpha^{2}\left\Vert B_{12}y\right\Vert _{2}^{2}\right)  }
{2}\\
&  +\frac{\sqrt{\left(  \lambda_{1}^{2}\left(  A_{1}\right)  +\alpha
^{2}\left\Vert B_{12}^{T}x\right\Vert _{2}^{2}+\lambda_{1}^{2}\left(
A_{2}\right)  +\alpha^{2}\left\Vert B_{12}y\right\Vert _{2}^{2}\right)
^{2}-4\left(  \left(  \lambda_{1}^{2}\left(  A_{1}\right)  +\alpha
^{2}\left\Vert B_{12}^{T}x\right\Vert _{2}^{2}\right)  \left(  \lambda_{1}
^{2}\left(  A_{2}\right)  +\alpha^{2}\left\Vert B_{12}y\right\Vert _{2}
^{2}\right)  -\theta^{2}\right)  }}{2}\\
&  =\frac{\left(  \lambda_{1}^{2}\left(  A_{1}\right)  +\alpha^{2}\left\Vert
B_{12}^{T}x\right\Vert _{2}^{2}+\lambda_{1}^{2}\left(  A_{2}\right)
+\alpha^{2}\left\Vert B_{12}y\right\Vert _{2}^{2}\right)  }{2}+\sqrt{\left(
\frac{\lambda_{1}^{2}\left(  A_{1}\right)  +\alpha^{2}\left\Vert B_{12}
^{T}x\right\Vert _{2}^{2}-\lambda_{1}^{2}\left(  A_{2}\right)  -\alpha
^{2}\left\Vert B_{12}y\right\Vert _{2}^{2}}{2}\right)  ^{2}+\theta^{2}}%
\end{align*}
which is obtained when
\begin{align*}
C_{1}  &  =\frac{\theta}{\sqrt{\theta^{2}+\left(  \lambda_{L}-\lambda_{1}
^{2}\left(  A_{1}\right)  -\alpha^{2}\left\Vert B_{12}^{T}x\right\Vert
_{2}^{2}\right)  ^{2}}}\\
C_{2}  &  =\frac{\lambda_{L}-\lambda_{1}^{2}\left(  A_{1}\right)  -\alpha
^{2}\left\Vert B_{12}^{T}x\right\Vert _{2}^{2}}{\sqrt{\theta^{2}+\left(
\lambda_{L}-\lambda_{1}^{2}\left(  A_{1}\right)  -\alpha^{2}\left\Vert
B_{12}^{T}x\right\Vert _{2}^{2}\right)  ^{2}}}%
\end{align*}

\subsection{Proof of Theorem \ref{theorem_higherorder}}

Any vector $z$ of size $2N$ with $zz^{T}=m$ can expressed as a linear
combination of the eigenvectors $(z_{1},z_{2},...,z_{2N})$ of matrix $W$
\[
\frac{z}{\sqrt{m}}=\sum_{i=1}^{2N}c_{i}z_{i}
\]
where $\sum_{i=1}^{2N}c_{i}^{2}=1$. Hence,
\begin{align*}
\frac{z^{T}W^{s}z}{z^{T}z}  &  =\left(  \sum_{i=1}^{2N}c_{i}z_{i}\right)
^{T}\left(  \sum_{i=1}^{2N}c_{i}W^{s}z_{i}\right) \\
&  =\left(  \sum_{i=1}^{2N}c_{i}z_{i}\right)  ^{T}\left(  \sum_{i=1}^{2N}
c_{i}\lambda_{i}^{s}z_{i}\right) \\
&  =\sum_{i=1}^{2N}c_{i}^{2}\lambda_{i}^{k}=\lambda_{1}^{s}\left(  \sum
_{i=1}^{2N}c_{i}^{2}\frac{\lambda_{i}^{s}}{\lambda_{1}^{k}}\right)
\end{align*}
Hence,
\[
lim_{k\rightarrow\infty}\left(  \frac{z^{T}W^{k}z}{z^{T}z}\right)
^{1/k}=\lambda_{1}\left(  W\right)
\]
According to Lyapunov's inequality,
\[
\left(  E\left[  \left\vert X\right\vert ^{s}\right]  \right)  ^{1/s}
\leq\left(  E\left[  \left\vert X\right\vert ^{t}\right]  \right)  ^{1/t}
\]
when $0<s<t$. Taking $\Pr[X=\frac{\lambda_{i}}{\lambda_{1}}]=c_{i}^{2}$, we
have
\[
\sum_{i=1}^{2N}c_{i}^{2}\frac{\lambda_{i}^{s}}{\lambda_{1}^{s}}\leq\sum
_{i=1}^{2N}c_{i}^{2}\left\vert \frac{\lambda_{i}}{\lambda_{1}}\right\vert
^{s}=\left(  E\left[  \left\vert X\right\vert ^{s}\right]  \right)  ^{1/s}
\leq\left(  E\left[  \left\vert X\right\vert ^{k}\right]  \right)  ^{1/k}
=\sum_{i=1}^{2N}c_{i}^{2}\frac{\lambda_{i}^{k}}{\lambda_{1}^{k}}
\]
since $k$ is even and $k>s>0$.

\subsection{Proof of Theorem \ref{theorem_upperbound}}%

\begin{align*}
\lambda_{1}\left(  W\right)   &  =\max_{x^{T}x+y^{T}y=1}\left[
\begin{array}
[c]{cc}%
x^{T} & y^{T}%
\end{array}
\right]  \left(  A+\alpha B\right)  \left[
\begin{array}
[c]{c}%
x\\
y
\end{array}
\right] \\
&  =\max_{x^{T}x+y^{T}y=1}\left(  \left[
\begin{array}
[c]{cc}%
x^{T} & y^{T}%
\end{array}
\right]  A\left[
\begin{array}
[c]{c}%
x\\
y
\end{array}
\right]  +\alpha\left[
\begin{array}
[c]{cc}%
x^{T} & y^{T}%
\end{array}
\right]  B\left[
\begin{array}
[c]{c}%
x\\
y
\end{array}
\right]  \right) \\
&  \leq\max_{x^{T}x+y^{T}y=1}\left(  x^{T}A_{1}x+y^{T}A_{2}y\right)
+\alpha\max_{x^{T}x+y^{T}y=1}\left[
\begin{array}
[c]{cc}%
x^{T} & y^{T}%
\end{array}
\right]  B\left[
\begin{array}
[c]{c}%
x\\
y
\end{array}
\right] \\
&  =\max\left(  {\lambda}_{1}{(A_{1}),\lambda}_{1}{(A_{2})}\right)
+\alpha\lambda_{1}(B)
\end{align*}
The inequality is due to the fact that the two terms are maximized
independently. The second term%
\begin{align*}
\lambda_{1}(B)  &  =\max_{x^{T}x+y^{T}y=1}\left(  x^{T}B_{12}y+y^{T}B_{12}
^{T}x\right) \\
&  =2\max_{x^{T}x+y^{T}y=1}x^{T}B_{12}y
\end{align*}
is equivalent to the system of equations%
\[
\left\{
\begin{array}
[c]{c}%
B_{12}y=\lambda_{1}(B)x\\
B_{12}y=\lambda_{1}(B)x\\
x^{T}x+y^{T}y=1
\end{array}
\right.
\]
or
\[
\left\{
\begin{array}
[c]{c}%
B_{12}^{T}B_{12}y=\lambda_{1}(B)^{2}y\\
B_{12}B_{12}^{T}x=\lambda_{1}(B)^{2}x\\
x^{T}x+y^{T}y=1
\end{array}
\right.
\]
which is to find the maximum eigenvalue (or more precisely the positive square
root) of the symmetric positive matrix $B_{12}B_{12}^{T}$
\[
\lambda_{1}(B)=\sqrt{\max_{x^{2}=1}x^{T}B_{12}B_{12}^{T}x}
\]
This actually proves Lemma \ref{bipartite}, the property $\lambda_{1}
(B)=\sqrt{\lambda_{1}(B_{12}B_{12}^{T})}$ of a bipartite graph $B.$

\subsection{Proof of Theorem \ref{theorem_perturbation_nondegenerate}}

The explicit expression up to the second order reads
\begin{equation}
\left(  A+\alpha B\right)  \left(  z^{(0)}+\alpha z^{(1)}+\alpha^{2}
z^{(2)}+O(\alpha^{3})\right)  =\left(  \lambda^{(0)}+\alpha\lambda
^{(1)}+\alpha^{2}\lambda^{(2)}+O(\alpha^{3})\right)  \left(  z^{(0)}+\alpha
z^{(1)}+\alpha^{2}z^{(2)}+O(\alpha^{3})\right)  \label{eigenequa}%
\end{equation}
The zero order expansion is simply
\[
Az^{(0)}=\lambda^{(0)}z^{(0)}
\]
The problem at zero order becomes to find the maximum of
\[
\frac{z^{(0)T}Az^{(0)}}{z^{(0)T}z^{(0)}}=\frac{\left(  z^{(0)},Az^{(0)}
\right)  }{\left(  z^{(0)},z^{(0)}\right)  }
\]

In the non-degenerate case,
\[
\max\frac{\left(  z^{(0)},Az^{(0)}\right)  }{\left(  z^{(0)},z^{(0)}\right)
}=\frac{\left(  x,A_{1}x\right)  }{\left(  x,x\right)  }=\lambda_{1}(A_{1})
\]
Hence,
\begin{align*}
\lambda^{(0)}  &  =\lambda_{1}(A_{1})\\
\left(  z^{(0)}\right)  ^{T}  &  =\left[  x^{T},\mathbf{0}^{T}\right]
\end{align*}
where the first $N$ elements of $z^{(0)}$ are $x$ and the rest $N$ elements
are all zeros. Let us look at the first order correction. Imposing the
identity for the first order expansion in (\ref{eigenequa}) gives
\begin{equation}
Az^{(1)}+Bz^{(0)}=\lambda^{(0)}z^{(1)}+\lambda^{(1)}z^{(0)}
\label{1ordercorrection}%
\end{equation}
Furthermore, we impose the normalization condition to $z$ (see (\ref{AAA})),
which leads to
\[
\left(  z^{(0)},z^{(1)}\right)  =0
\]
The first order correction to the principal eigenvector is orthogonal to the
zero order. Plugging this result in (\ref{1ordercorrection})
\begin{align*}
\left(  z^{(0)},Az^{(1)}+Bz^{(0)}\right)   &  =\lambda^{(0)}\left(
z^{(0)},z^{(1)}\right)  +\lambda^{(1)}\left(  z^{(0)},z^{(0)}\right) \\
\left(  A^{T}z^{(0)},z^{(1)}\right)  +\left(  z^{(0)},Bz^{(0)}\right)   &
=\lambda^{(1)}%
\end{align*}
that is
\begin{equation}
\left(  z^{(0)},Bz^{(0)}\right)  =\lambda^{(1)} \label{lamdafirstordernon}%
\end{equation}
Since $\left(  z^{(0)}\right)  ^{T}=\left(
\begin{array}
[c]{cc}%
x^{T} & \mathbf{0}^{T}%
\end{array}
\right)  $ and $B=\left[
\begin{array}
[c]{cc}%
\mathbf{0} & B_{12}\\
B_{12}^{T} & \mathbf{0}%
\end{array}
\right]  $, the first order correction is this non-degenerate case is null
$\lambda^{(1)}=0.$ Equation (\ref{1ordercorrection}) allows us to calculate
also the first order correction to the eigenvector
\begin{align*}
Az^{(1)}+Bz^{(0)}  &  =\lambda^{(0)}z^{(1)}\\
\left(  A-\lambda^{(0)}I\right)  z^{(1)}  &  =-Bz^{(0)}%
\end{align*}
$\left(  A-\lambda^{(0)}I\right)  $ is invertible out of its kernel $\left(
A-\lambda^{(0)}I\right)  z=\mathbf{0}$ (that is the linear space generated by
$z^{(0)}$) and since $Bz^{(0)}\perp z^{(0)}$ we have
\begin{equation}
z^{(1)}=\left(  \lambda^{(0)}I-A\right)  ^{-1}Bz^{(0)}
\label{xfirstordercorrection}%
\end{equation}
Let us look for the second order correction. Imposing the identification of
the second order term of (\ref{eigenequa}) we obtain
\[
Az^{(2)}+Bz^{(1)}=\lambda^{(0)}z^{(2)}+\lambda^{(1)}z^{(1)}+\lambda
^{(2)}z^{(0)}
\]
Projecting this vectorial equation on $z^{(0)}$ provides the second order
correction to $\lambda$
\begin{align*}
(z^{(0)},Az^{(2)}+Bz^{(1)})  &  =\lambda^{(0)}(z^{(0)},z^{(2)})+\lambda
^{(1)}\left(  z^{(0)},z^{(1)}\right)  +\lambda^{(2)}\left(  z^{(0)}
,z^{(0)}\right) \\
\lambda^{(2)}  &  =(z^{(0)},Az^{(2)})+(z^{(0)},Bz^{(1)})-\lambda^{(0)}
(z^{(0)},z^{(2)})\\
&  =\lambda^{(0)}(z^{(0)},z^{(2)})+(z^{(0)},Bz^{(1)})-\lambda^{(0)}
(z^{(0)},z^{(2)})\\
&  =(z^{(0)},Bz^{(1)})
\end{align*}
Substituting (\ref{xfirstordercorrection}) gives
\begin{equation}
\lambda^{(2)}=\left(  z^{(0)},B\left(  \lambda^{(0)}I-A\right)  ^{-1}
Bz^{(0)}\right)  \label{secondorder_A}%
\end{equation}
which can be further expressed as a function of the largest
eigenvalue/eigenvector of individual network $A_{1},A_{2}$ or their
interconnections $B_{12}$. Since
\[
Bz^{(0)}=\left(
\begin{array}
[c]{ll}%
0 & B_{12}\\
B_{12}^{T} & 0
\end{array}
\right)  \left(
\begin{array}
[c]{l}%
x\\
0
\end{array}
\right)  =\left(
\begin{array}
[c]{l}%
0\\
B_{12}^{T}x
\end{array}
\right)
\]
we have
\begin{align*}
\lambda^{(2)}  &  =\left(  B^{T}z^{(0)},\left(  \lambda^{(0)}I-A\right)
^{-1}Bz^{(0)}\right)  =\left(
\begin{array}
[c]{ll}%
0 & B_{12}x^{(0)}%
\end{array}
\right)  \left(
\begin{array}
[c]{ll}%
(\lambda^{(0)}I-A_{1}) & 0\\
0 & (\lambda^{(0)}I-A_{2})
\end{array}
\right)  ^{-1}\left(
\begin{array}
[c]{l}%
0\\
B_{12}^{T}x^{(0)}%
\end{array}
\right) \\
&  =\left(
\begin{array}
[c]{ll}%
0 & (x^{(0)})^{T}B_{12}%
\end{array}
\right)  \left(
\begin{array}
[c]{ll}%
(\lambda^{(0)}I-A_{1}) & 0\\
0 & (\lambda^{(0)}I-A_{2})
\end{array}
\right)  ^{-1}\left(
\begin{array}
[c]{l}%
0\\
B_{12}^{T}x^{(0)}%
\end{array}
\right) \\
&  =\left(
\begin{array}
[c]{ll}%
0 & (x^{(0)})^{T}B_{12}%
\end{array}
\right)  \left(
\begin{array}
[c]{ll}%
(\lambda^{(0)}I-A_{1})^{-1} & 0\\
0 & (\lambda^{(0)}I-A_{2})^{-1}%
\end{array}
\right)  \left(
\begin{array}
[c]{l}%
0\\
B_{12}^{T}x^{(0)}%
\end{array}
\right) \\
&  =(x^{(0)})^{T}B_{12}\left(  \lambda^{(0)}I-A_{2}\right)  ^{-1}B_{12}
^{T}x^{(0)}%
\end{align*}
which finishes the proof.

\subsection{Proof of Theorem \ref{theorem_perturbation_degenerate}}

In this case, the solution $z^{(0)}$ of the zero order expansion equation
\[
Az^{(0)}=\lambda^{(0)}z^{(0)}
\]
can be any combination of the largest eigenvector of the two individual
networks $x$ and $y$:
\begin{align*}
z^{(0)}  &  =c_{1}x+c_{2}y\\
c_{1}^{2}+c_{2}^{2}  &  =1
\end{align*}
and $\lambda^{(0)}=\lambda_{1}(A_{1})=\lambda_{1}(A_{2})$. The first order
correction of the largest eigenvalue correction in the non-degenerate case
(\ref{lamdafirstordernon}) holds as well for the generate case
\[
\left(  z^{(0)},Bz^{(0)}\right)  =\lambda^{(1)}
\]
which is however non-zero in the degenerate case due to the structure of
$z^{(0)}$ and is maximized by the right choice of $c_{1}$ and $c_{2}$. Thus,
\begin{align*}
\lambda_{1}(W)  &  =\max_{c_{1},c_{2}}\left(  \lambda_{1}(A_{1})+\alpha\left(
z^{(0)},Bz^{(0)}\right)  \right)  +o(\alpha^{2})\\
&  =\lambda_{1}(A_{1})+\max_{c_{1},c_{2}}\alpha c_{1}c_{2}\left(  \left(
B_{12}y,x\right)  +\left(  B_{12}^{T}x,y\right)  \right)  +o(\alpha^{2})\\
&  =\lambda_{1}(A_{1})+\frac{1}{2}\alpha\left(  \left(  B_{12}y,x\right)
+\left(  B_{12}^{T}x,y\right)  \right)  +o(\alpha^{2})\\
&  =\lambda_{1}(A_{1})+\frac{1}{2}\alpha\left(  x,B_{12}y\right)
+o(\alpha^{2})
\end{align*}
where $c_{1}c_{2}$ is maximum when $c_{1}=c_{2}=1/\sqrt{2}.$

One may also evaluate the second order correction to the largest eigenvalues
of the degenerate case. The following results we derived in the non-degenerate
case hold as well for the degenerate case
\[
\left\{
\begin{array}
[c]{c}%
\lambda^{(2)}=\left(  z^{(0)},Bz^{(1)}\right) \\
Az^{(1)}+Bz^{(0)}=\lambda^{(0)}z^{(1)}+\lambda^{(1)}z^{(0)}%
\end{array}
\right.
\]
The latter equation allows to calculate the first order correction to the
dominate eigenvector $z^{(1)}$:
\begin{equation}
(\lambda^{(0)}I-A)z^{(1)}=\left(  B-\lambda^{(1)}\right)  z^{(0)}
\label{th12egen1}%
\end{equation}
Any linear equation admits solutions when the constant term $\left(
B-\lambda^{(1)}\right)  z^{(0)}$ is orthogonal to the kernel of the adjoint
matrix of $\lambda^{(0)}I-A$.
\[
Ker(\lambda^{(0)}I-A)=\left\{  v:\ (\lambda^{(0)}I-A)v=0\right\}
\]
Apart from pathological cases, each the two interactive nets have non
degenerate maximum eigenvalues ($\lambda^{(0)}$) corresponding to the dominant
eigenvectors $x^{(0)}$ and $y^{(0)}$:
\[
\left\{
\begin{array}
[c]{lll}%
A_{1}x^{(0)} & = & \lambda^{(0)}x^{(0)}\\
A_{2}y^{(0)} & = & \lambda^{(0)}y^{(0)};
\end{array}
\right.
\]
in this case, the kernel of the matrix $\lambda^{(0)}I-A$ is just the linear
space generated by the maximum eigenvalue of the single nets:
\[
v=\left(
\begin{array}
[c]{l}%
ax^{(0)}\\
by^{(0)}%
\end{array}
\right)  .
\]
As stated, the constant term is orthogonal to the entire kernel:
\[
v\cdot\left(  B-\lambda^{(1)}\right)  z^{(0)}=\left(
\begin{array}
[c]{ll}%
ax^{(0)} & by^{(0)}%
\end{array}
\right)  \cdot\left(  B-\lambda^{(1)}\right)  z^{(0)}=a\left(  \left(
B_{12}^{T}x,y\right)  -\lambda^{(1)}\right)  +b\left(  \left(  x,B_{12}
y\right)  -\lambda^{(1)}\right)  =0.
\]
Therefore the solution of the previous equation exists and all solutions
differ by a vector in $Ker(\lambda^{(0)}I-A)$. It is worth stressing that the
value of $\lambda^{(2)}$ does not depend on this extra terms, that is
$\lambda^{(2)}$ is invariant under the transformation:
\[
\left\{
\begin{array}
[c]{l}%
x^{(1)}\rightarrow x^{(1)}+ax^{(0)}\\
y^{(1)}\rightarrow y^{(1)}+by^{(0)};
\end{array}
\right.  ;
\]
providing the normalization condition is respected:
\[
(z^{(0)})^{T}z^{(1)}=0
\]
that is
\[
(x^{(0)})^{T}x^{(1)}+(y^{(0)})^{T}y^{(1)}=0\rightarrow(x^{(0)})^{T}
(x^{(1)}+ax^{(0)})+(y^{(0)})^{T}(y^{(1)}+by^{(0)})=0
\]
that leads to $a=b$. Let us apply the transformation to $\lambda^{(2)}$:
\[
\lambda^{(2)}= \frac{1}{\sqrt{2}}\left(
\begin{array}
[c]{l}%
B_{12} y^{(0)}\\
B_{12}^{T} x^{(0)}%
\end{array}
\right)  ^{T} \left(
\begin{array}
[c]{l}%
x^{(1)}\\
y^{(1)}%
\end{array}
\right)  \to\hat{\lambda}^{(2)}= \frac{1}{\sqrt{2}}\left(
\begin{array}
[c]{l}%
B_{12} y^{(0)}\\
B_{12}^{T} x^{(0)}%
\end{array}
\right)  ^{T} \left(
\begin{array}
[c]{l}%
x^{(1)}+ax^{(0)}\\
y^{(1)}-ay^{(0)}%
\end{array}
\right)
\]
\[
\hat{\lambda}^{(2)}=\lambda^{(2)}+a\left[  (y^{(0)})^{T} B_{12}^{T}
x^{(0)}-(x^{(0)})^{T}B_{12}y^{(0)}\right]  =\lambda^{(2)}
\]
Therefore we are allowed to select a definite solution as we where fixing a
gauge. We will impose the orthogonality of $x^{(1)}$ with $x^{(0)}$ and
$y^{(1)}$ with $y^{(0)}$.

The linear operator $\lambda^{(0)}I-A_{1}+x^{(0)}(x^{(0)})^{T})$ and
$\lambda^{(0)}I-A_{1}$ operate identically over all vectors orthogonal to
$x^{(0)}$ (as all constant terms are in our case), while it behaves as an
identity in the linear space generated by $x^{(0)}$. Therefore to fix the
gauge one may substitute $\lambda^{(0)}I-A_{1}$ with $\lambda^{(0)}
I-A_{1}+x^{(0)}(x^{(0)})^{T})$. The same argument holds for $A_{2}$. This
allows us to provide $\lambda^{(2)}$ an explicit expression and to calculate
it algebraically. Equation~\ref{th12egen1} in components reads:
\[
\left\{
\begin{array}
[c]{lll}%
(\lambda^{(0)}I-A_{1} )x^{(1)} & = & B_{12}y^{(0)}-\lambda^{(1)}x^{(0)}\\
(\lambda^{(0)}I-A_{2} )y^{(1)} & = & B^{T}_{12}x^{(0)}-\lambda^{(1)}y^{(0)};
\end{array}
\right.
\]
that, after fixing the gauge, becomes:
\[
\left\{
\begin{array}
[c]{lll}%
(\lambda^{(0)}I-A_{1}+ x^{(0)}(x^{(0)})^{T})x^{(1)} & = & B_{12}
y^{(0)}-\lambda^{(1)}x^{(0)}\\
(\lambda^{(0)}I-A_{2}+ y^{(0)}(y^{(0)})^{T})y^{(1)} & = & B^{T}_{12}
x^{(0)}-\lambda^{(1)}y^{(0)};
\end{array}
\right.
\]
and hence the first-order correction to the dominant eigenvector can be
\[
\left\{
\begin{array}
[c]{lll}%
x^{(1)} & = & (\lambda^{(0)}I-A_{1}+ x^{(0)}(x^{(0)})^{T})^{-1} (
B_{12}y^{(0)}-\lambda^{(1)}x^{(0)} )\\
y^{(1)} & = & (\lambda^{(0)}I-A_{2}+ y^{(0)}(y^{(0)})^{T})^{-1} (B^{T}
_{12}x^{(0)}-\lambda^{(1)}y^{(0)});
\end{array}
\right.
\]

The second order correction can be finally be calculate algebraically,
resorting to the spectral properties of the isolated networks,
\[
\lambda^{(2)}=\frac{1}{2}\left(
\begin{array}
[c]{l}%
B_{12}y^{(0)}\\
B_{12}^{T}x^{(0)}%
\end{array}
\right)  ^{T}\left(
\begin{array}
[c]{ll}%
(\lambda^{(0)}I-A_{1}+x^{(0)}(x^{(0)})^{T})^{-1} & 0\\
0 & (\lambda^{(0)}I-A_{2}+y^{(0)}(y^{(0)})^{T})^{-1}%
\end{array}
\right)  \left(
\begin{array}
[c]{l}%
B_{12}y^{(0)}-\lambda^{(1)}x^{(0)}\\
B_{12}^{T}x^{(0)}-\lambda^{(1)}y^{(0)}%
\end{array}
\right)
\]
that is:
\begin{equation}
\lambda^{(2)}=(y^{(0)})^{T}B_{12}^{T}(\lambda^{(0)}I-A_{1}+x^{(0)}%
(x^{(0)})^{T})^{-1}(B_{12}y^{(0)}-\lambda^{(1)}x^{(0)})+(x^{(0)})^{T}%
B_{12}(\lambda^{(0)}I-A_{2}+y^{(0)}(y^{(0)})^{T})^{-1}(B_{12}^{T}%
x^{(0)}-\lambda^{(1)}y^{(0)}).
\end{equation}

\end{document}